\definecolor{mygray}{gray}{.9}
\newtheorem{prop}{Proposition}
\begin{document}

\title{Multi-view Multi-label Fine-grained Emotion Decoding from Human Brain Activity}

\author{Kaicheng~Fu$^\dagger$, Changde~Du$^\dagger$, Shengpei~Wang
	and Huiguang~He$^*$,~\IEEEmembership{Senior Member,~IEEE}
	\IEEEcompsocitemizethanks{
		\IEEEcompsocthanksitem K. Fu is with the Research Center for Brain-Inspired Intelligence, National Laboratory of Pattern Recognition, Institute of Automation, Chinese Academy of Sciences, Beijing 100190, China, also with the School of Artificial Intelligence, University of Chinese Academy of Sciences, Beijing 100049, China (e-mail: fukaicheng2019@ia.ac.cn).
		\IEEEcompsocthanksitem C. Du, S. Wang and H. He are with the Research Center for Brain-Inspired Intelligence, National Laboratory of Pattern Recognition, Institute of Automation, Chinese Academy of Sciences, Beijing 100190, China (e-mail:
		changde.du@ia.ac.cn; wangshengpei2014@ia.ac.cn). H. He is also with the School of Artificial Intelligence, University of Chinese Academy of Sciences, Beijing 100049, China, and the Center for Excellence in Brain Science and Intelligence Technology, Chinese Academy of Sciences, Beijing 100190, China. 
	}
	\thanks{$\dagger$: equal contribution}
	\thanks{ $*$: corresponding author (huiguang.he@ia.ac.cn)}}

\markboth{Journal of \LaTeX\ Class Files,~Vol.~14, No.~8, August~2021}%
{Shell \MakeLowercase{\textit{et al.}}: A Sample Article Using IEEEtran.cls for IEEE Journals}


\maketitle

\begin{abstract}
Decoding emotional states from human brain activity plays an important role in brain-computer interfaces. Existing emotion decoding methods still have two main limitations: one is only decoding a single emotion category from a brain activity pattern and the decoded emotion categories are coarse-grained, which is inconsistent with the complex emotional expression of human; the other is ignoring the discrepancy of emotion expression between the left and right hemispheres of human brain. In this paper, we propose a novel multi-view multi-label hybrid model for fine-grained emotion decoding (up to 80 emotion categories) which can learn the expressive neural representations and predicting multiple emotional states simultaneously. Specifically, the generative component of our hybrid model is parametrized by a multi-view variational auto-encoder, in which we regard the brain activity of left and right hemispheres and their difference as three distinct views, and use the product of expert mechanism in its inference network. The discriminative component of our hybrid model is implemented by a multi-label classification network with an asymmetric focal loss. For more accurate emotion decoding, we first adopt a label-aware module for emotion-specific neural representations learning and then model the dependency of emotional states by a masked self-attention mechanism. Extensive experiments on two visually evoked emotional datasets show the superiority of our method.
\end{abstract}

\begin{IEEEkeywords}
Fine-grained Emotion Decoding, Multi-view Learning, Multi-label Learning, Variational Autoencoder, Product of Experts.
\end{IEEEkeywords}

\section{Introduction}
\IEEEPARstart{E}{motion} decoding from visually evoked human brain activity measured by functional Magnetic Resonance Imaging (fMRI) is an emerging research area and plays an important role in Brain-Computer Interfaces (BCIs). Most of the previous emotion decoding studies \cite{li2018hierarchical,du2018semi,li2020novel} considered very coarse-grained and limited emotion categories, such as positive, neutral and negative, which can not represent the complex emotions we experience in daily life. Recent studies \cite{cowen2017self,koide2020distinct} have proven that the emotions expressed by human constitute a high-dimensional semantic space which motivates us to study more fine-grained emotion decoding. In addition, a more obvious limitation in previous decoding methods is that they only predicted one emotion category for one brain activity pattern \cite{du2018semi,li2020novel,horikawa2020neural}, which means they only regard emotion decoding as a multi-class, rather than multi-label, decoding problem. However, there is no doubt that multiple human emotions can be elicited simultaneously by emotion stimuli. Fig. \ref{fig:video} shows three video screenshots of emotion-inducing movie clips used in our experimental dataset. Taking the third scene fighting as an example, we may feel \emph{angry} and \emph{disgust} for this uncivilized behavior while we are also \emph{surprised} and \emph{confused} with the reason for fighting. Besides, it is \emph{fearful} and \emph{anxious} for us when we see people fighting on the street, which will make us \emph{sad} about the injury as well. 
\begin{figure}
	\centering
	\subfigure[\emph{Adoration}, \emph{Awe}, \emph{Joy}, \emph{Nostalgia} and \emph{Romance}]{\includegraphics[height=0.7in,width=3.3in]{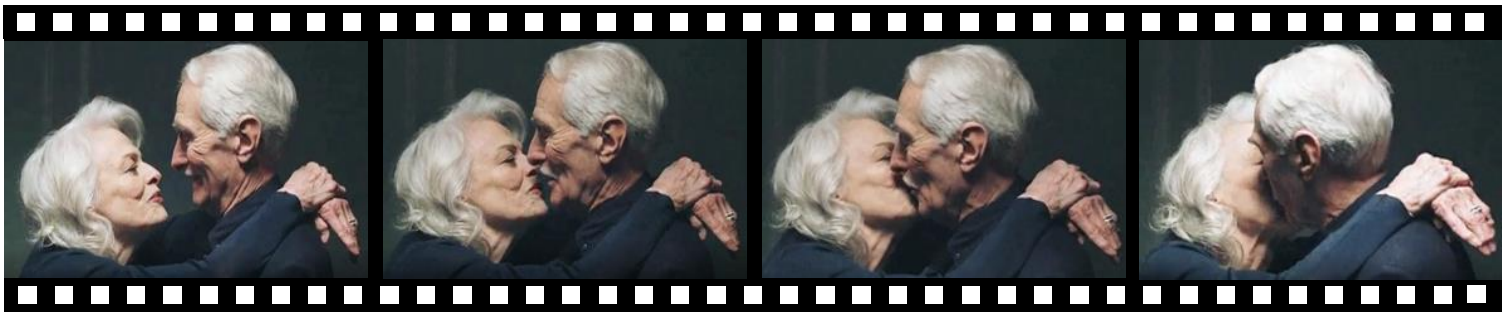}}
	\subfigure[\emph{Adoration}, \emph{Amusement}, \emph{Awe}, \emph{Interest} and \emph{Joy}]{\includegraphics[height=0.7in,width=3.3in]{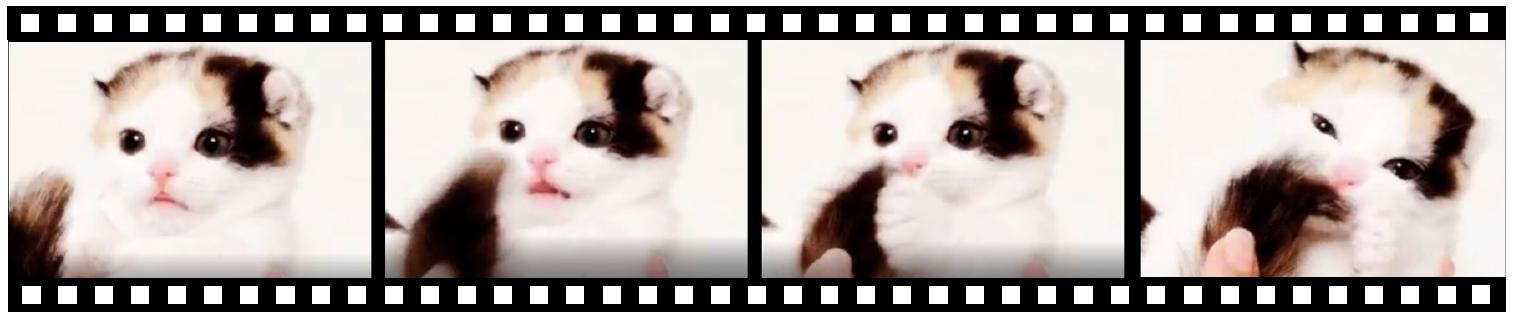}}
	\subfigure[\emph{Anger}, \emph{Anxiety}, \emph{Confusion}, \emph{Disgust}, \emph{Fear}, \emph{Sadness} and \emph{Surprise}.]{\includegraphics[height=0.7in,width=3.3in]{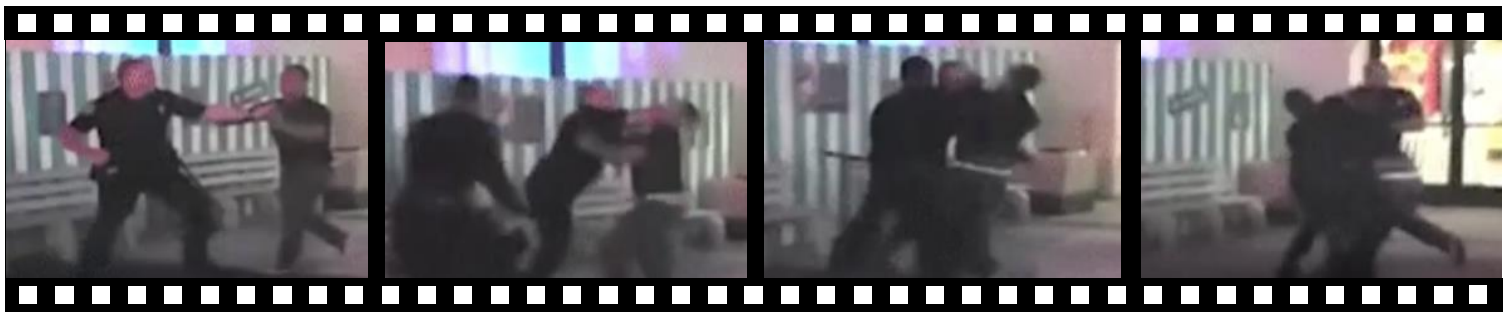}}
	\caption{Several video screenshots and their corresponding emotion categories which illustrate that multiple human emotions can be elicited simultaneously, in which (a) is a romantic scene, (b) is a cat scene and (c) is a fighting scene.}
	\label{fig:video}
	\vspace{-0.5cm}
\end{figure} 

On the other hand, most previous emotion decoding methods adopted voxel based decoding \cite{kragel2016decoding,horikawa2020neural,putkinen2021decoding} which suffers from severe overfitting due to high data dimensionality, small sample size and low signal-to-noise ratio. In order to alleviate overfitting, they barely used simple linear models \cite{horikawa2020neural,saarimaki2021classification} or generalized linear models (GLMs) \cite{putkinen2021decoding} for decoding, leading to weak expressive ability of the model. Although some of them have involved the use of brain regions of interest (ROI) \cite{horikawa2020neural} or functional cortical \cite{putkinen2021decoding} signals for emotion decoding, they  essentially are still subject to voxel based decoding in which advanced deep learning algorithms are not easy to use.

More importantly, for visually evoked emotion decoding, neuroscience research studies have revealed the discrepancy in emotional expression between the left and right hemispheres despite the common visual input \cite{dimond1976differing,davidson1992anterior} which can be regarded as the prior knowledge for developing models. Many previous studies have proven that exploiting the bi-hemisphere discrepancy is capable of improving the performance of emotion decoding by using electroencephalogram (EEG) based data \cite{li2020novel,huang2021differences}. They employ independent networks to learn features for the left and right hemispheres, and then fuse them for further decoding task. Inspired by these, we regard the left and right hemispheres, and their difference as multiple information resources (views) to explore the interaction between the two hemispheres and take advantage of a Product of Experts (PoE) mechanism \cite{cao2014generalized} for feature level fusion.

Taking the above limitations into account, we propose a novel multi-view multi-label hybrid model called ML-BVAE (a hybrid model of \textbf{M}ulti-\textbf{L}abel classification network with \textbf{B}rain \textbf{V}ariational \textbf{A}uto-\textbf{E}ncoder) which is, to the best of our knowledge, the first algorithm for fine-grained emotion decoding from human brain activity and can offer great potential for a fine-grained emotion BCI system. To this end, we first propose an fMRI dimensionality reduction method utilizing ROI pooling which can effectively realize noise suppression and dimensionality reduction of the original signal. Then, we raise a multi-view multi-label hybrid model which enables us to not only achieve accurate emotion decoding but also investigate the relationship between emotion categories. An overview of our experimental paradigm can be found in Fig. \ref{fig:overview}. Our main contributions can be summarized as follows: 
\begin{itemize}
	\item We propose an effective fMRI dimensionality reduction method called ROI pooling which can suppress noise and alleviate overfitting when complex models are used for emotion decoding.
	\item We design a tailored BVAE with the PoE mechanism for learning expressive multi-view neural representations while taking the bi-hemisphere discrepancy into account, which is of great importance to emotion decoding studies.
	\item  We develop a novel multi-label classification network to exploit the relationship between brain activity and emotion labels, and the dependency across emotion labels. For the first time, we realize multi-label emotion decoding from brain activity.
	\item Sufficient experiments on two visually evoked emotional datasets demonstrate that ML-BVAE outperforms the compared methods on several evaluation metrics and can achieve fine-grained emotion decoding of up to 80 emotions. Our source code is available in \url{https://github.com/KaichengFu1997/ML-BVAE}.
\end{itemize}
\begin{figure}
	\centering
	\includegraphics[height=2.1in,width=3.6in]{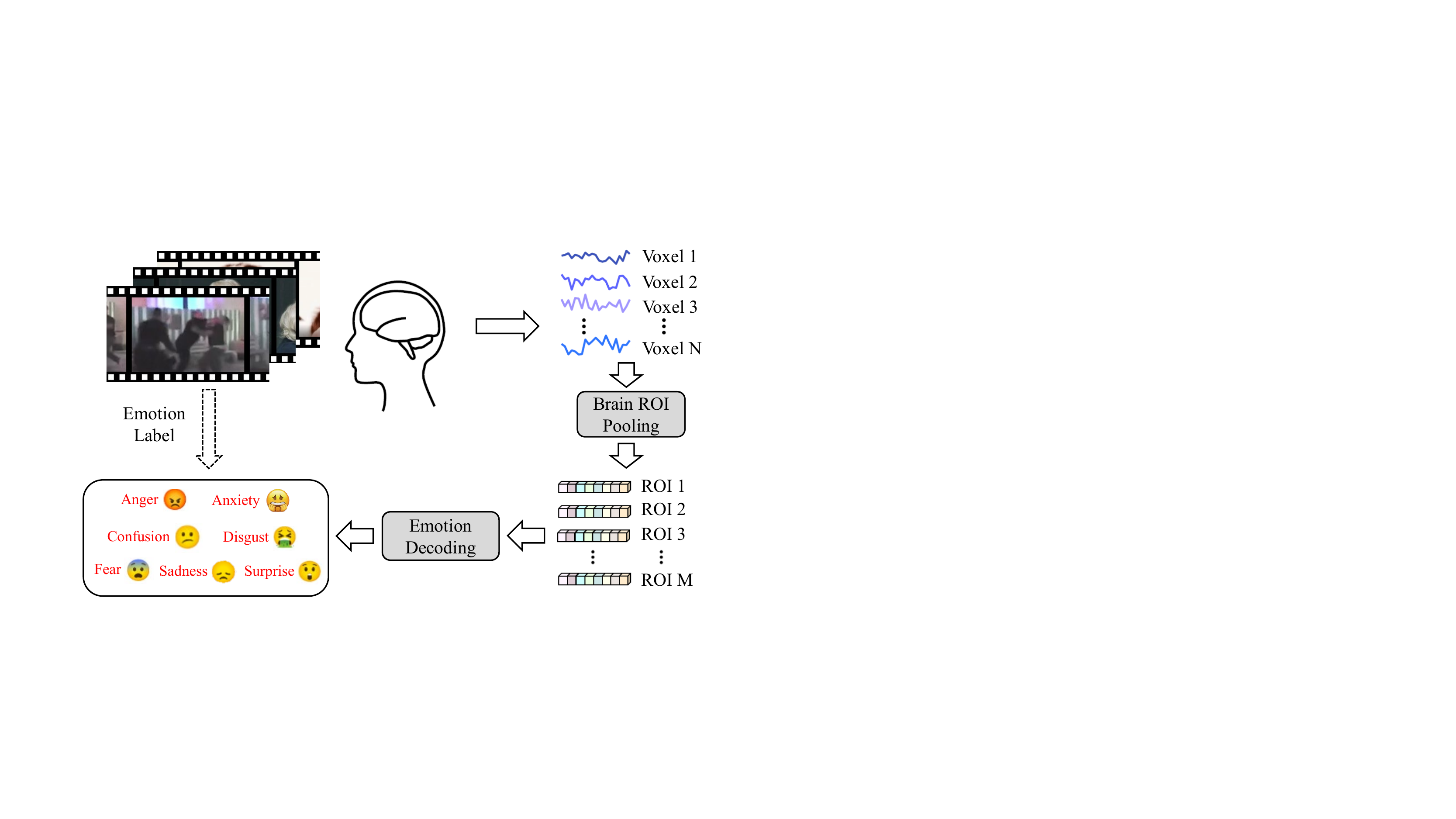}
	\caption{An overview of our experimental paradigm. Subjects watched emotion-inducing movie clips and their fMRI responses were recorded simultaneously. We first convert the voxel data into ROI pooling features, and then perform fine-grained multi-label emotion decoding.}
	\label{fig:overview}
	\vspace{-0.3cm}
\end{figure}  
\section{Related work}
\subsection{Emotion decoding from brain activity} In recent years, decoding emotional states from brain activity has been widely studied. Most existing studies performed emotion decoding based on either fMRI or EEG signals. In fMRI-based emotion decoding, \cite{baucom2012decoding} decoded four affective dimensions with logistic regression with percent signal change features; \cite{horikawa2020neural} used linear regression to make predictions separately for each of the 34 emotion categories and each of the 14 affective dimensions based on the responses of ROI voxels; \cite{putkinen2021decoding} utilized support vector machine (SVM) to build the relationship between four emotion categories and brain activity in the auditory and motor cortex. Although they have made some progress in emotion decoding, they can only use simple linear models limited by high dimensional features. In EEG-based emotion decoding, \cite{song2018eeg} used a dynamical Graph Convolutional Networks (GCNs) to model the multichannel EEG features and performed EEG emotion classification; \cite{shen2021contrastive,zhao2021plug} used contrastive learning and domain adaptation respectively for cross-subject EEG-based emotion recognition; \cite{li2020novel,huang2021differences} both took the discrepancy of bi-hemisphere into consideration for emotion decoding from EEG-based emotional datasets. Despite the complex models and scenarios, they still performed multi-class emotion decoding with coarse-grained emotion categories. Compared with EEG-based methods, fMRI-based studies can perform more fine-grained emotion decoding due to its higher spatial resolution but easily suffers from overfitting. Our method belongs to fMRI-based emotion decoding and we perform dimensionality reduction to the original fMRI data which can alleviate overfitting when complex models such as Deep Neural Networks (DNNs) are used. We also achieve multi-label emotion decoding which means decoding a set of emotional states simultaneously.
\subsection{Multi-label emotion classification} Recently, researchers have noticed the complexity of human emotion expression and have committed to studying the problem of multi-label emotion classification. \cite{fei2020latent} proposed latent emotion memory (LEM) for learning latent emotion distribution and utilized bi-directional GRU to learn emotion coherence so as to realize the multi-label emotion classification of text data. \cite{wang2019capturing} employed adversarial learning and combined adversarial loss and multi-label supervised loss to achieve multi-label emotion tagging for video data. \cite{zhang2021multi2} exploited Graph Neural Networks (GNNs) with heterogeneous hierarchical message passing for multi-modal multi-label emotion classification with textual, visual and acoustic modalities. However, these studies all belong to coarse-grained emotion classification which involve only 12 emotion categories at most \cite{zhang2021multi2}. It is difficult for these studies to truly reflect the complex emotion categories of human. Based on this disadvantage, \cite{abdul2017emonet} built a fine-grained emotion dataset with 24 types of emotions using a large collection of tweets; \cite{demszky2020goemotions} collected another fine-grained emotion dataset named GoEmotions with 58k English Reddit comments, which involves 27 emotion categories. Furthermore, \cite{huang2021seq2emo} proposed a model called Seq2Emo with a bi-directional decoder for multi-label emotion classification on GoEmotions. However, the abovementioned studies are all based on multimedia data, rather than recognizing human emotions based on physiological signals which is also known as emotion decoding. Different from these studies, for the first time, we achieve fine-grained emotion decoding from human brain activity with fMRI data, which can decode up to 80 emotion categories.
\subsection{Multi-view multi-label learning} Multi-label learning has many applications in text categorization \cite{guo2021label}, bioinformatics \cite{zhou2020iatc}, web mining \cite{tang2009large}, etc. In recent years, multi-label learning has also been used in semantic decoding from brain activity \cite{huth2012continuous,huth2016decoding,li2018multi}. However, the study of multi-label emotion decoding is still lacking. Besides, in multi-view learning, the information in some views are useful to handle the weakness of other views. Furthermore, multi-view learning can naturally be embedded into multi-label learning tasks to improve the classification performance. Models in multi-view multi-label learning can be divided into generative models and discriminative models. \cite{sun2020lcbm} proposed a Gaussian mixture VAE placing a conditional Bernoulli mixtures distribution on the labels $\textbf{y}$, which belongs to a probabilistic generative model. \cite{zhang2018latent} used matrix factorization to uncover the latent patterns among different views for more accurate multi-label classification. \cite{wu2019multi} exploited shared subspace for fusing multi-view representations and also took view-specific information into consideration. The above two models are both attributed to the discriminative model. Besides, \cite{kuleshov2017deep} proposed deep hybrid models which can bridge the discriminative and generative models in which the former often attain higher predictive accuracy, while the latter are more strongly regularized and can obtain more expressive representations. The discriminative component of the hybrid model pays more attention to the predict probability $p(\textbf{y}|\textbf{x})$ rather than the distribution of the label $p(\textbf{y})$. Inspired by this, we propose a multi-view multi-label hybrid model for learning the expressive neural representations and predicting multiple emotional states accurately.

\section{Methodology}
\subsection{Problem definition}
Formally speaking, we have a multi-view multi-label dataset with $N$ samples $\mathcal{D} = \left\lbrace (\mathbf{x}^{l}_{i},\mathbf{x}^{r}_{i},\mathbf{x}^{d}_{i},\mathbf{y}_i) |1\leq i \leq N \right\rbrace $, where $\mathbf{x}^{l}_{i}\in\mathbb{R}^{D_l}$ and $ \mathbf{x}^{r}_{i}\in \mathbb{R}^{D_r}$ are feature vectors of the left and right hemisphere views respectively, $\mathbf{x}^{d}_{i}$ is the difference view between the left and right hemispheres (i.e. $\mathbf{x}^{d}_{i}=\mathbf{x}^{l}_{i}-\mathbf{x}^{r}_{i}$), $\mathbf{y}_i \in \mathbb{R}^{C}$ is label vector in which $y_{ic}\in \left\lbrace 0,1 \right\rbrace $. $y_{ic}=1$ means that emotion label $c$ is relevant to the brain activity $i$. We introduce shared latent variables $\mathbf{z}$ which can be regarded as multi-view neural representations and are also capable of predicting multiple emotion labels. For simplicity, we rewrite the joint of three views $(\mathbf{x}^{l},\mathbf{x}^{r},\mathbf{x}^{d})$ as $\mathbf{x}$. The task of our hybrid model ML-BVAE is to learn a joint model $p(\mathbf{x},\mathbf{y},\mathbf{z})$ from $\mathcal{D}$. We assume that $p(\mathbf{x},\mathbf{y},\mathbf{z})$ has a parametric form specified by the decomposition:
\begin{equation}
	p(\mathbf{x},\mathbf{y},\mathbf{z}) = p(\mathbf{y}|\mathbf{x},\mathbf{z})p(\mathbf{x},\mathbf{z}),
\end{equation}
in which $p(\mathbf{x},\mathbf{z})$ and $p(\mathbf{y}|\mathbf{x},\mathbf{z})$ are, respectively, the latent variables generative component and discriminative component of ML-BVAE. In practice, we assume that the latent variables $\mathbf{z}$ contain all the information of $\mathbf{x}$. Therefore, we simplify the discriminative component as $p(\mathbf{y}|\mathbf{z})$.

The standard approach for training ML-BVAE is to maximize the marginal likelihood:
\begin{equation}
	\textup{ log }p(\mathbf{x},\mathbf{y})= \textup{ log}\int_{\mathbf{z}}p(\mathbf{x},\mathbf{y},\mathbf{z}) = \textup{ log}\int_{\mathbf{z}}p(\mathbf{y}|\mathbf{x},\mathbf{z})p(\mathbf{x},\mathbf{z}).
\end{equation}
For the intractable integral, we apply variational inference to obtain a tight lower bound of the marginal likelihood with Jensen's inequality :
\begin{align}
	\nonumber &	\textup{ log }\int_{\mathbf{z}}p(\mathbf{y}|\mathbf{x},\mathbf{z})p(\mathbf{x},\mathbf{z})\\
	\nonumber		&=\textup{ log }\int_{\mathbf{z}}p(\mathbf{y}|\mathbf{z})p(\mathbf{x},\mathbf{z})\\
	\nonumber &= \textup{ log }\int_{\mathbf{z}} \frac{p(\mathbf{y}|\mathbf{z})p(\mathbf{x},\mathbf{z})}{q(\mathbf{z}|\mathbf{x})}q(\mathbf{z}|\mathbf{x})\\
	\nonumber & \geq \mathbb{E}_{q(\mathbf{z}|\mathbf{x})}[\textup{ log }p(\mathbf{x},\mathbf{z})-\textup{ log }q(\mathbf{z}|\mathbf{x})+\textup{ log }p(\mathbf{y}|\mathbf{z})]\\
	& = \mathcal{L}_{ELBO} + \mathbb{E}_{q(\mathbf{z}|\mathbf{x})}[\textup{ log }p(\mathbf{y}|\mathbf{z})],
	\label{likelihood}
\end{align}
in which $q(\mathbf{z}|\mathbf{x})$ is the variational distribution. The lower bound of ML-BVAE has two terms: $\mathcal{L}_{ELBO}$ denotes the evidence lower bound (ELBO) of the generative component, while $\mathbb{E}_{q(\mathbf{z}|\mathbf{x})}[\textup{ log }p(\mathbf{y}|\mathbf{z})]$ can measure the classification error of the discriminative component (the larger the value, the smaller the classification error.).

After training, we can obtain the multi-label classification probability by:
\begin{equation}
	p(\mathbf{y}|\mathbf{x}) = p(\mathbf{y}|\mathbf{z})p(\mathbf{z}|\mathbf{x}) \approx p(\mathbf{y}|\mathbf{z})q(\mathbf{z}|\mathbf{x}).
\end{equation}
\begin{figure}
	\centering
	\includegraphics[height=1.2in,width=3.1in]{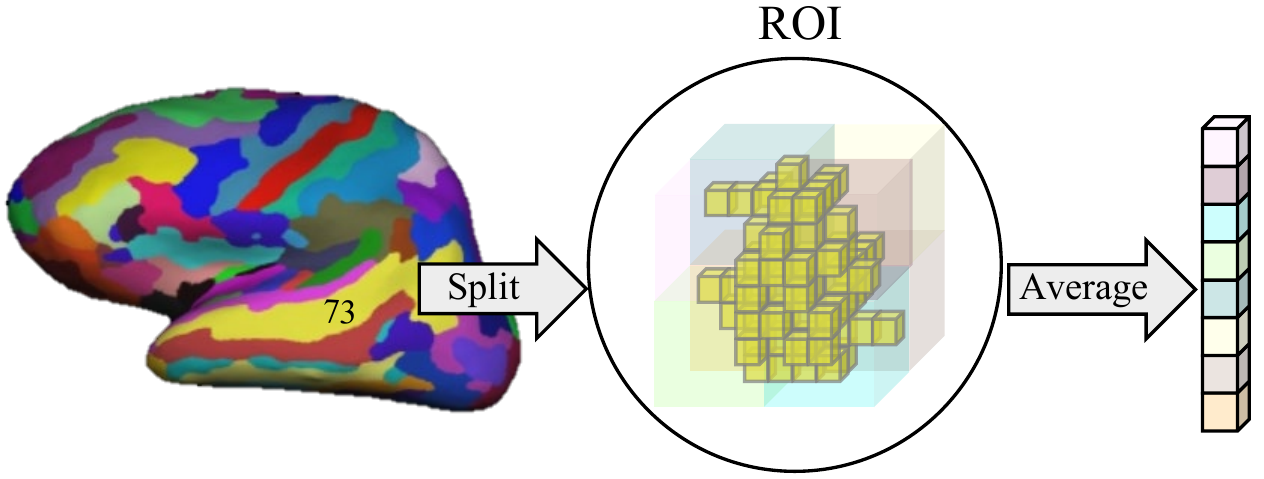}
	\caption{A schematic diagram of ROI pooling with Destrieux atlas \cite{destrieux2010automatic}. Take the 73th ROI \emph{Superior temporal sulcus} as an example, each yellow small cube in this ROI represents a voxel. Voxels in this ROI are placed in a 3-D volume which is spilted evenly into several subvolumes. Then the brain activity of voxels in each subvolume are averaged.}
	\label{fig:roi}
\end{figure}
\subsection{Overview}
The proposed approach involves three key components: \emph{fMRI Dimensionality Reduction with ROI Pooling}, \emph{Multi-view Neural Representations Learning with BVAE} and \emph{Multi-label Learning with Emotion-specific Neural Representations}, in which the second and the third components shown in Fig. \ref{fig:mlbvae} serve as the generative and the discriminative component of ML-BVAE, respectively. Below, we will introduce them separately.

\begin{figure*}[!hbt]
	\centering
	\includegraphics[height=2.8in,width=7.2in]{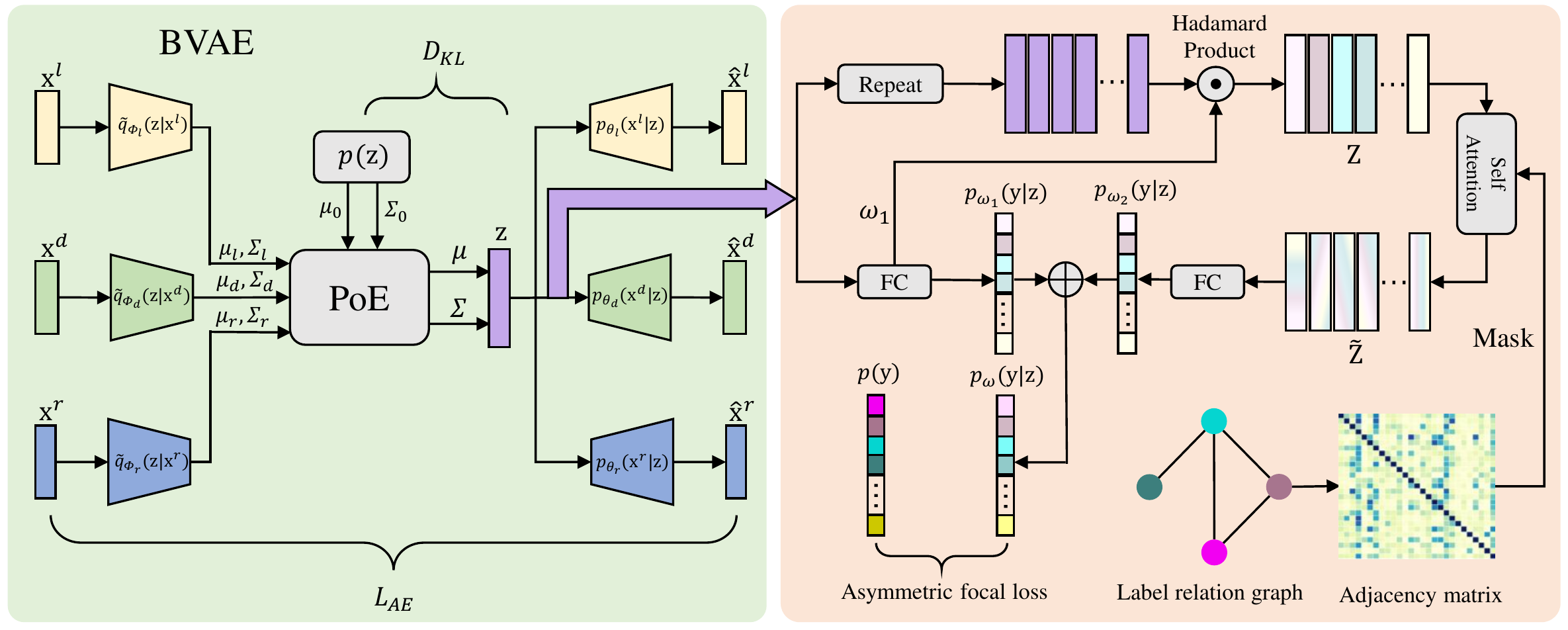}
	\caption{An overview of the proposed ML-BVAE approach. The green part is the BVAE which serves as the generative component and the orange part is the multi-label classification network with a label-aware module and a masked self-attention module which is the discriminative component. The label relation graph is constructed based on the label co-occurrence relationship.}
	\label{fig:mlbvae}
	\vspace{-0.2cm}
\end{figure*}
\subsection{fMRI Dimensionality Reduction with ROI Pooling}
Using fMRI signal directly for voxel based decoding will introduce considerable noise and easily cause overfitting due to high data dimensionality, small sample size and low signal-to-noise ratio \cite{beliy2019voxels,li2020visual,du2020conditional}. Therefore, we first use brain atlas \cite{glasser2016multi,destrieux2010automatic} to divide the whole brain cortical into multiple brain areas (ROIs). In order to further extract the features of each ROI, we place the voxels of each ROI in a 3-D volume according to its coordinates, then split the volume evenly into several subvolumes and calculate the average brain activity of voxels in each subvolume as the feature of this subvolume as illustrated in Fig. \ref{fig:roi} (take 8 sub-volumes for example). Then we concatenate the features of ROI in each hemisphere to obtain ROI pooling features $\mathbf{x}^l$ and $\mathbf{x}^r$. This method can realize noise suppression and dimensionality reduction of the input signal. We set the number of features in each ROI as an optional hyperparameter named $N_{ROIF}$. 

\subsection{Multi-view Neural Representations Learning with BVAE}

In BVAE, we assume that the left hemisphere view $\mathbf{x}^{l}$, the right hemisphere view $\mathbf{x}^{r}$ and the difference view $\mathbf{x}^{d}$ can be generated by the common latent variables $\mathbf{z}$, which can be regarded as the common features among the three distinct views. That is we assume a generative model of the form:
\begin{equation}
	\label{eq:generative model}
	p_{\theta}(\mathbf{x},\mathbf{z}) = p(\mathbf{z})p_{\theta_l}(\mathbf{x}^{l}|\mathbf{z})p_{\theta_r}(\mathbf{x}^{r}|\mathbf{z})p_{\theta_d}(\mathbf{x}^{d}|\mathbf{z}),
\end{equation}
where $\theta = \left\lbrace \theta_l,\theta_r,\theta_d\right\rbrace $. Then the three views ELBO, written as $\mathcal{L}_{ELBO_{lrd}}$, is equal to:
\begin{align}\label{Eq:mmelbo}
	\nonumber	&\underbrace{\mathbb{E}_{q_{\phi}}[ \lambda_{l} \textup{ log }p_{\theta_l}(\mathbf{x}^{l}|\mathbf{z})+\lambda_{r} \textup{ log }p_{\theta_r}(\mathbf{x}^{r}|\mathbf{z})+\lambda_{d} \textup{ log }p_{\theta_d}(\mathbf{x}^{d}|\mathbf{z})]}_{L_{AE}}\\
	&-\underbrace{\beta \textup{ KL}[q_{\phi}(\mathbf{z}|\mathbf{x})||p(\mathbf{z})]}_{D_{KL}},
\end{align}
where the $L_{AE}$ term denotes the reconstruction error of the three views and the $D_{KL}$ term regularizes the encoder by minimizing the Kullback-Leibler divergence between the joint approximate posterior $q_{\phi}(\mathbf{z}|\mathbf{x})$ and the prior $p(\mathbf{z})$.  In practice, $\beta$ is slowly annealed to 1 \cite{higgins2016beta,bowman2015generating} to form a valid ELBO. $\lambda_l$,  $\lambda_r$ and  $\lambda_d$ are hyperparameters that balance the reconstruction error among the three views.

The key issue for training BVAE is to specify the joint approximate posterior distribution  $q_{\phi}(\mathbf{z}|\mathbf{x})$. According to the conditional independence assumptions in the generative model, the true joint posterior distribution can be written as:
\begin{align}
	\label{eqn:derive}
	\nonumber p(\mathbf{z}|\mathbf{x}) & = \frac{p(\mathbf{x}|\mathbf{z})p(\mathbf{z})}{p(\mathbf{x})}=\frac{p(\mathbf{z})}{p(\mathbf{x})}p(\mathbf{x}^{l}|\mathbf{z})p(\mathbf{x}^{r}|\mathbf{z})p(\mathbf{x}^{d}|\mathbf{z})\\
	\nonumber &=\frac{p(\mathbf{z})}{p(\mathbf{x})}\frac{p(\mathbf{z}|\mathbf{x}^l)p(\mathbf{x}^l)}{p(\mathbf{z})}	\frac{p(\mathbf{z}|\mathbf{x}^r)p(\mathbf{x}^r)}{p(\mathbf{z})} \frac{p(\mathbf{z}|\mathbf{x}^d)p(\mathbf{x}^d)}{p(\mathbf{z})}\\
	\nonumber &=\frac{p(\mathbf{z}|\mathbf{x}^l)p(\mathbf{z}|\mathbf{x}^r)p(\mathbf{z}|\mathbf{x}^d)}{p^2(\mathbf{z})}\frac{p(\mathbf{x}^l)p(\mathbf{x}^r)p(\mathbf{x}^d)}{p(\mathbf{x})}\\
	&\propto\frac{p(\mathbf{z}|\mathbf{x}^l)p(\mathbf{z}|\mathbf{x}^r)p(\mathbf{z}|\mathbf{x}^d)}{p^2(\mathbf{z})}.
\end{align}
which means the joint posterior is a product of individual posterior from each view, with an additional quotient by the square of prior.
For numerical stability, if we approximate the single view true posterior with $q_{\phi_l}(\mathbf{z}|\mathbf{x}^l) = \tilde{q}_{\phi_l}(\mathbf{z}|\mathbf{x}^l)p(\mathbf{z})$, $q_{\phi_r}(\mathbf{z}|\mathbf{x}^r) = \tilde{q}_{\phi_r}(\mathbf{z}|\mathbf{x}^r)p(\mathbf{z})$ and $q_{\phi_d}(\mathbf{z}|\mathbf{x}^d) = \tilde{q}_{\phi_d}(\mathbf{z}|\mathbf{x}^d)p(\mathbf{z})$, where $\tilde{q}_{\phi_l}(\mathbf{z}|\mathbf{x}^l)$, $\tilde{q}_{\phi_r}(\mathbf{z}|\mathbf{x}^r)$ and $\tilde{q}_{\phi_d}(\mathbf{z}|\mathbf{x}^d)$ are the underlying inference network of each view, we can avoid the quotient term by:
\begin{align}
	\label{eqn:derive:simple}
	\nonumber	p(\mathbf{z}|\mathbf{x})
	\nonumber	&\propto \frac{p(\mathbf{z}|\mathbf{x}^l)p(\mathbf{z}|\mathbf{x}^r)p(\mathbf{z}|\mathbf{x}^d)}{p^2(\mathbf{z})}\\
	\nonumber	&\approx \frac{[\tilde{q}_{\phi_l}(\mathbf{z}|\mathbf{x}^l)p(\mathbf{z})][\tilde{q}_{\phi_r}(\mathbf{z}|\mathbf{x}^r)p(\mathbf{z})][\tilde{q}_{\phi_d}(\mathbf{z}|\mathbf{x}^d)p(\mathbf{z})]}{p^2(\mathbf{z})}\\
	&= p(\mathbf{z})\tilde{q}_{\phi_l}(\mathbf{z}|\mathbf{x}^l)\tilde{q}_{\phi_r}(\mathbf{z}|\mathbf{x}^r)\tilde{q}_{\phi_d}(\mathbf{z}|\mathbf{x}^d). 
\end{align}
That is, the encoder of BVAE can be written as:
\begin{align}
	q_\phi(\mathbf{z}|\mathbf{x}) = p(\mathbf{z})\tilde{q}_{\phi_l}(\mathbf{z}|\mathbf{x}^l)\tilde{q}_{\phi_r}(\mathbf{z}|\mathbf{x}^r)\tilde{q}_{\phi_d}(\mathbf{z}|\mathbf{x}^d), 
\end{align}
in which $\phi = \left\lbrace \phi_l,\phi_r,\phi_d\right\rbrace $. In other words, the joint approximate posterior distribution is a product of prior and three single view posterior distributions called product of experts (PoE) \cite{wu2018multimodal}. When these four distributions are all Gaussian which means $p(\mathbf{z})=\mathcal{N}(\boldsymbol{\mu}_0|\mathbf{\Sigma}_0)$,  $\tilde{q}_{\phi_l}(\mathbf{z}|\mathbf{x}^l)=\mathcal{N}(\boldsymbol{\mu}_l|\mathbf{\Sigma}_l)$, $\tilde{q}_{\phi_r}(\mathbf{z}|\mathbf{x}^r)=\mathcal{N}(\boldsymbol{\mu}_r|\mathbf{\Sigma}_r)$ and $\tilde{q}_{\phi_d}(\mathbf{z}|\mathbf{x}^d)=\mathcal{N}(\boldsymbol{\mu}_d|\mathbf{\Sigma}_d)$, there is an analytical solution of the product distribution acquired above \cite{cao2014generalized}:

\begin{equation}
	q_\phi(\mathbf{z}|\mathbf{x})=\mathcal{N}(\boldsymbol{\mu}|\mathbf{\Sigma}),
\end{equation}
which means $q_\phi(\mathbf{z}|\mathbf{x})$ is Gaussian with:
\begin{align}
	\nonumber \boldsymbol{\mu} &= (\mathbf{\Sigma}_0^{-1}\boldsymbol{\mu}_0+\mathbf{\Sigma}_l^{-1}\boldsymbol{\mu}_l+\mathbf{\Sigma}_r^{-1}\boldsymbol{\mu}_r+ \mathbf{\Sigma}_d^{-1}\boldsymbol{\mu}_d)\mathbf{\Sigma},\\
	\mathbf{\Sigma} &= (\mathbf{\Sigma}_0^{-1}+\mathbf{\Sigma}_l^{-1}+\mathbf{\Sigma}_r^{-1}+\mathbf{\Sigma}_d^{-1})^{-1},
\end{align}
whose proof can be found in Appendix A. In practice, we assume that $p(\mathbf{z})$ is spherical Gaussian prior with $\boldsymbol{\mu}_0 = \mathbf{0}$ and $\mathbf{\Sigma}_0 = \mathbf{I}$.

BVAE supports single view input and generates three views at the same time due to the product property of its encoder. In order to model the information intra- and inter- the three views, in addition to the ELBO defined by Eq. (\ref{Eq:mmelbo}), we define another three ELBOs.
\begin{align}
	\label{Eq:mmelbo1}
	\nonumber \mathcal{L}_{ELBO_l} &= \underbrace{\mathbb{E}_{q_{\phi_l}}[ \lambda_{l} \textup{ log }p_{\theta_l}(\mathbf{x}^{l}|\mathbf{z})]}_{L_{AE_{ll}}} + \underbrace{\mathbb{E}_{q_{\phi_l}}[\lambda_{r} \textup{ log }p_{\theta_r}(\mathbf{x}^{r}|\mathbf{z})]}_{L_{AE_{lr}}} \\
	&+ \underbrace{\mathbb{E}_{q_{\phi_l}}[\lambda_{d} \textup{ log }p_{\theta_r}(\mathbf{x}^{d}|\mathbf{z})]}_{L_{AE_{ld}}} - \underbrace{\beta \textup{ KL}[q_{\phi_l}(\mathbf{z}|\mathbf{x}^{l})|| p(\mathbf{z})]}_{D_{KL_{l}}},
\end{align}
\begin{align}
	\label{Eq:mmelbo2}
	\nonumber \mathcal{L}_{ELBO_r} &= \underbrace{\mathbb{E}_{q_{\phi_r}}[ \lambda_{l} \textup{ log }p_{\theta_l}(\mathbf{x}^{l}|\mathbf{z})]}_{L_{AE_{rl}}} + \underbrace{\mathbb{E}_{q_{\phi_r}}[\lambda_{r} \textup{ log }p_{\theta_r}(\mathbf{x}^{r}|\mathbf{z})]}_{L_{AE_{rr}}} \\
	&+ \underbrace{\mathbb{E}_{q_{\phi_r}}[\lambda_{d} \textup{ log }p_{\theta_r}(\mathbf{x}^{d}|\mathbf{z})]}_{L_{AE_{rd}}} - \underbrace{\beta \textup{ KL}[q_{\phi_r}(\mathbf{z}|\mathbf{x}^{r})|| p(\mathbf{z})]}_{D_{KL_{r}}},
\end{align}
\begin{align}
	\label{Eq:mmelbo3}
	\nonumber \mathcal{L}_{ELBO_d} &= \underbrace{\mathbb{E}_{q_{\phi_d}}[ \lambda_{l} \textup{ log }p_{\theta_l}(\mathbf{x}^{l}|\mathbf{z})]}_{L_{AE_{dl}}} + \underbrace{\mathbb{E}_{q_{\phi_d}}[\lambda_{r} \textup{ log }p_{\theta_r}(\mathbf{x}^{r}|\mathbf{z})]}_{L_{AE_{dr}}} \\
	&+ \underbrace{\mathbb{E}_{q_{\phi_d}}[\lambda_{d} \textup{ log }p_{\theta_r}(\mathbf{x}^{d}|\mathbf{z})]}_{L_{AE_{dd}}} - \underbrace{\beta \textup{ KL}[q_{\phi_d}(\mathbf{z}|\mathbf{x}^{d})|| p(\mathbf{z})]}_{D_{KL_{d}}},
\end{align}
where $\mathcal{L}_{ELBO_l}$, $\mathcal{L}_{ELBO_r}$ and $\mathcal{L}_{ELBO_d}$ mean only using the left, right and difference single view as the input to the BVAE, respectively. The $L_{AE}$ terms with the same subscript are reconstruction errors intra-view and the $L_{AE}$ terms with different subscript are reconstruction errors inter-view. $D_{KL_{l}}$, $D_{KL_{r}}$ and $D_{KL_{d}}$ are KL divergence between the three views encoder and the prior, respectively. Above all, the total ELBO of BVAE can be written as (Actually, $\mathcal{L}_{ELBO_{lrd}}$ is the lower bound on the joint loglikelihood and another three ELBOs are useful for training individual encoders \cite{wu2018multimodal}.) :
\begin{align}
	\mathcal{L}_{ELBO} &= \mathcal{L}_{ELBO_{lrd}}+\mathcal{L}_{ELBO_l} + \mathcal{L}_{ELBO_r} + \mathcal{L}_{ELBO_d}.
	\label{total_elbo}
\end{align}

\subsection{Multi-label Learning with Emotion-specific Neural Representations}
In order to further model the relationship between multi-view neural representations and emotion labels, we use a label-aware module \cite{chen2021learning} to get a set of emotion-specific neural representations from $\mathbf{z}$ by utilizing the distinct discriminative properties of each label. Firstly, we apply classifiers which can be implemented by fully connected layers to the multi-view neural representations $\mathbf{z}\in \mathbb{R}^D$:
\begin{equation}
	p_{\omega_{1}}({\mathbf{y}}|{\mathbf{z}}) = \textup{sigmoid}[f(\mathbf{z};\omega_{1})],
\end{equation}
where $\omega_{1} \in \mathbb{R}^{C\times D}$ represents the parameters of these classifiers and $C$ is the number of emotion labels. Each classifier $\omega_{1}^{c}\in \mathbb{R}^D$ extracts the information corresponding to emotion class $c$ and predicts the probability of emotion class $c$ from the brain activity. Then the emotion-specific neural representations can be obtained via:
\begin{equation}
	\mathbf{Z} = \textup{repeat}(\mathbf{z})\odot \omega_{1} \in \mathbb{R}^{C \times D},
\end{equation}
in which $\textup{repeat}(\mathbf{z})$ indicates the operation of copying $\mathbf{z}\in \mathbb{R}^D$ for $C$ times and $\odot$ denotes the Hadamard product. In this way, $\mathbf{Z}^c \in \mathbb{R}^D$ captures the information related to the emotion class $c$. 

To further model the dependency of emotion labels, we take use of a masked self-attention operation \cite{vaswani2017attention} across the emotion-specific neural representations. In this module, emotion-specific neural representations $\mathbf{Z}$ generate a set of query, key and value matrices $(\mathbf{Q},\mathbf{K},\mathbf{V})$ by three linear transformations whose parameters are $\left\lbrace \omega_q, \omega_k,\omega_v \right\rbrace $
Then a $C\times C$ attention matrix $\mathbf{A}$ is obtained as follows:
\begin{equation}
	\mathbf{A} = \textup{softmax}(\frac{\mathbf{Q}\mathbf{K}^T}{\sqrt{d'}}+\textup{Mask}) ,
	\label{attention_mat}
\end{equation}
\begin{equation}
	\textup{Mask}(j,k) = \frac{N_{j,k}}{N_{j}},
\end{equation}
where $d'$ is the dim of query and key. $\textup{Mask}\in \mathbb{R}^{C\times C}$ is the label graph adjacency matrix \cite{zhang2021multi} in which $\textup{Mask}(j,k)$ is the frequency of emotion $j$ co-occurrence with emotion $k$ in the training ground truth, $N_{j,k}$ is the number of training samples that include both emotion $j$ and $k$, and $N_j$ is the number of training samples in which emotion $j$ occurs. By this, we apply an emotion label-wise correlation map $\mathbf{A}$ onto the emotion specific neural representations. Then we can get the refined emotion-specific neural representations $\tilde{\mathbf{Z}}$ by:
\begin{equation}
	\tilde{\mathbf{Z}} = \mathbf{A}\mathbf{V}.
\end{equation}
Then we use another fully connected layers to the refined emotion-specific neural representations:
\begin{equation}
	p_{\omega_{2}}({\mathbf{y}}|{\mathbf{z}}) = \textup{sigmoid}[g(\tilde{\mathbf{Z}};\omega_{g})],
\end{equation}
where $\omega_{2} = \left\lbrace\omega_q, \omega_k,\omega_v, \omega_{g}\right\rbrace $. The final output of the ML-BVAE is:
\begin{equation}
	p_{\omega}({\mathbf{y}}|{\mathbf{z}})=p_{\omega_{1}}({\mathbf{y}}|{\mathbf{z}})+p_{\omega_{2}}({\mathbf{y}}|{\mathbf{z}}),
\end{equation}
in which $\omega = \left\lbrace \omega_{1},\omega_{2}\right\rbrace $. 
\subsection{Loss function}
In supervised scenario, to more effectively address the sample imbalance problem, we substitute the log-likelihood in the classification error term with asymmetric focal loss \cite{ben2020asymmetric,liu2021query2label}:
\begin{equation}
	\begin{aligned}
		\textup{ log } p_{\omega}(\mathbf{y}|\mathbf{z}) \approx \frac{1}{C}
		\sum_{c=1}^{C}
		\begin{cases}
			(1-p_\omega^c)^{\gamma^+}\log (p_\omega^c), & y_c=1,\\
			(p_\omega^c)^{\gamma^-}\log (1-p_\omega^c), & y_c=0,\\
		\end{cases}
	\end{aligned}
	\label{eq:loss}
\end{equation}
where 
$y_c$ is the label that indicates whether brain activity $\mathbf{x}$ has emotional state $c$ and $p_\omega^c$ is the $c$-th component of $p_\omega(\mathbf{y}|\mathbf{z})$. $\gamma^+$ and $\gamma^-$ are hyperparameters that balance the loss of positive and negative labels. Above all, the lower bound of the marginal likelihood can be:
\begin{equation}
	\mathcal{L}(\phi,\theta,\omega) = \lambda_{1}\mathcal{L}_{ELBO}(\phi,\theta) + \lambda_{2} \mathbb{E}_{q_{\phi}(\mathbf{z}|\mathbf{x})}[\textup{ log }p_{\omega}(\mathbf{y}|\mathbf{z})],
	\label{loss_function}
\end{equation}
where $\lambda_{1}$ and $\lambda_{2}$ are hyper parameters that balance the loss between the generative and the discriminative component. After training, given a test brain activity pattern $\mathbf{x}$, we can predict the probability of multiple emotional states by:
\begin{equation}
	p(\mathbf{y}|\mathbf{x}) = p_{\omega^*}(\mathbf{y}|\mathbf{z})q_{\phi^*}(\mathbf{z}|\mathbf{x}),
\end{equation}
where $\omega^*$ and $\phi^*$ are optimal parameters of the multi-label classification network and the encoder of BVAE, respectively.
\begin{table}[H]
	\centering
	\caption{The details of the datasets used in our experiments.}
	\label{data sets}
	\resizebox{\columnwidth}{!}{	
		\begin{tabular}{c|c c  c c c}
			\hline
			\hline
			Dataset  & $\#$Subjects        & $\#$Instances              &$\#$Voxels       &$\#$ROIs      &$\#$Emotions           \\ \hline
			MEMO27        & 5                   & 2196                       & 115070       &360             &27          \\
			MEMO80       & 8                   & 5400                       & 83693        &148             &80        \\
			\hline
			\hline
	\end{tabular}}
\end{table}

\section{Experiments} \label{experiments}

\subsection{Datasets}
The properties of the two visually evoked emotional datasets used in our experiment are summarized in Table \ref{data sets}. Below, we will introduce them in detail.
\subsubsection{MEMO27}
MEMO27 is a fine-grained \textbf{M}ulti-label \textbf{EMO}tion decoding dataset based on a publicly available fMRI dataset\footnote{available at https://doi.org/10.6084/m9.figshare.11988351.v1}, which contains the blood-oxygen-level dependent (BOLD) responses of five subjects. fMRI data were collected using a 3T Siemens scanner with a multiband gradient Echo-Planar Imaging (EPI) sequence (TR, 2000 ms; TE, 43 ms; flip angle, 80 deg; FOV, 192 × 192 mm; voxel size, 2 × 2 × 2 mm; number of slices, 76; multiband factor, 4) \cite{horikawa2020neural}. Subjects were presented with 2196 videos whose durations ranged from 0.15s to 90s \cite{horikawa2020neural}. These videos can provoke a variety of emotions. Each video was voted by multiple raters across 34 emotion categories and the voting ratio of each emotion category was used as the rating of that emotion category. The fMRI data was preprocessed and averaged with each video stimulus \cite{horikawa2020neural} which means the brain activity of one voxel is a scalar for a video stimulus. We use HCP360 atlas defined in \cite{glasser2016multi} to segment the cerebral cortex of each subject in MEMO27 into 360 ROIs (180 ROIs per hemisphere) as illustrated in \cite{horikawa2020neural}.

As for the emotion ratings, we first choose 27 of 34 emotion categories according to \cite{cowen2017self}. Then we set a threshold of 0.1 to construct an emotion label matrix from emotion ratings. The average number of emotion labels is 4.64 and the corresponding label density is 0.172. 
\subsubsection{MEMO80}
MEMO80 is a more fine-grained \textbf{M}ulti-label \textbf{EMO}tion decoding dataset\footnote{Restricted by copyright, MEMO80 is available once the corresponding author of \cite{koide2020distinct} gives permission.}, which contains the BOLD responses of eight subjects. fMRI data were collected using a 3T Siemens scanner with a multiband gradient EPI sequence (TR, 2000 ms; TE, 30 ms; flip angle, 62 deg; FOV, 192 × 192 mm; voxel size, 2 × 2 × 2 mm; number of slices, 72; multiband factor, 3) \cite{koide2020distinct}. Subjects were presented with emotion-inducing audiovisual movies over a period of three hours. For each of the one-second movie scenes, the movie was annotated with respect to 80 emotions by multiple raters. The fMRI data and emotion ratings were preprocessed as illustrated in \cite{koide2020distinct}. According to \cite{koide2020distinct}, each sampling moment of fMRI can be regarded as a sample which leads to 5400 samples in total for one subject. We use Destrieux atlas defined in \cite{destrieux2010automatic} to segment the cerebral cortex of each subject in MEMO80 into 148 ROIs (74 ROIs per hemisphere) as illustrated in \cite{koide2020distinct}.

As for the emotion ratings, we exploit all the 80 emotion categories for the completeness and diversity. Then we set a threshold of 0.5 to construct an emotion label matrix from emotion ratings. The average number of emotion labels is 17.9 and the corresponding label density is 0.224.

\subsection{Compared methods}
The performance of ML-BVAE is compared against six algorithms, including a linear model, three single view multi-label learning methods and two multi-view multi-label learning methods. (1) \textbf{linear regression (LR)} \cite{horikawa2020neural}: A traditional fMRI decoding method. In order to be consistent with the settings of \cite{horikawa2020neural}, we trained a linear model for each emotion category with all cortical voxels. Then we binarized the predicted ratings using the same threshold as the label matrix construction. (2) \textbf{Benchmark}: A multi-layer perceptron (MLP) where the number of parameters is no less than ML-BVAE. The input is the concatenation of all view features; (3) \textbf{CA2E} \cite{yeh2017learning}: The first deep neural networks (DNNs) based multi-label learning model which can learn deep latent space by mapping the feature and label jointly. The hyperparameter $\alpha$ is set to 2.0 and the dimension $l$ of the latent space is set to 128 and 64 for MEMO27 and MEMO80, respectively. (4) \textbf{ML-GCN} \cite{chen2019multi}: A multi-label learning method based on GCNs. We adopt MLP used in Benchmark for feature extraction; (5) \textbf{SIMM} \cite{wu2019multi}: A multi-view multi-label learning method which leverages shared subspace exploitation and view-specific information extraction. The hyper parameters $\alpha$ and $\beta$ are both set to 0.1; (6) \textbf{GARDIS} \cite{chen2020multi}: A multi-view multi-label learning method based on graph and spectral clustering. The hyperparameters $ \left\lbrace  \alpha, \gamma,\eta,k\right\rbrace $  are set to $\left\lbrace 0.95, 0.1, 0.1,8\right\rbrace $, respectively.
\newcolumntype{b}{>{\columncolor{mygray}}c}
\begin{table}
	\centering
	\caption{Multi-label emotion decoding performance of each compared approach in terms of One-Error, Ranking Loss, Micro F1, Macro F1, example-based Average Precision and mean Average Precision in MEMO27. $\uparrow$ ($\downarrow$) indicates the larger (smaller) the value, the better the performance.}
	\label{of1}
	\scriptsize
	\resizebox{\columnwidth}{!}{
		\begin{tabular}{c|c|cccccc} 
			\hline
			\hline
			Subject & Method & OneE$\downarrow$ & RL$\downarrow$ & miF1$\uparrow$& maF1$\uparrow$&e-AP$\uparrow$&mAP$\uparrow$\\
			\hline
			& LR\cite{horikawa2020neural} & 0.546 & 0.506 & 0.197& 0.155&0.309&0.265\\
			& Benchmark & 0.315 & 0.190 & 0.449& 0.336&0.609&0.438\\
			&CA2E\cite{yeh2017learning}&0.442&0.255&0.428&0.345&0.519&0.354\\
			&  ML-GCN\cite{chen2019multi}  & 0.307  & 0.190  & 0.445  & 0.331  & 0.606&0.421  \\
			&  SIMM\cite{wu2019multi}  & 0.360  & 0.217  & 0.396  &0.239   &0.573 &0.399  \\
			&  GARDIS\cite{chen2020multi}  &0.315   & 0.230  & 0.228    & 0.120  &0.580 & 0.394\\
			\cline{2-8}
			\multirow{-7}{*}{Subject1}
			&ML-BVAE  & \textbf{0.290}  & \textbf{0.183}  & \textbf{0.504}  & \textbf{0.398} & \textbf{0.620}& \textbf{0.451} \\
			\hline
			\hline
			\multirow{7}{*}{Subject2}
			& LR\cite{horikawa2020neural} & 0.525 & 0.493 & 0.192& 0.141&0.317&0.251\\
			& Benchmark & 0.320 & 0.182 &0.457 & 0.330&0.619&0.447\\
			&CA2E\cite{yeh2017learning}&0.427&0.249&0.427&0.326&0.527&0.353\\
			&  ML-GCN\cite{chen2019multi}  & 0.309  & 0.181  & 0.450  & 0.324  & 0.621 & 0.441\\
			&  SIMM\cite{wu2019multi}  & 0.363  & 0.205  & 0.381  & 0.230  & 0.578 & 0.414\\
			&  GARDIS\cite{chen2020multi}  & 0.327  & 0.221  & 0.208    & 0.110  & 0.572&0.391 \\
			\cline{2-8}
			&ML-BVAE  & \textbf{0.288}  & \textbf{0.177}& \textbf{0.515}  & \textbf{0.410} & \textbf{0.632}& \textbf{0.466} \\
			\hline
			\hline
			\multirow{7}{*}{Subject3}
			& LR\cite{horikawa2020neural} & 0.628 & 0.504 & 0.157& 0.104&0.303&0.228\\
			& Benchmark & 0.308 & 0.182 & 0.452& 0.325&0.619&0.449\\
			&CA2E\cite{yeh2017learning}&0.459&0.259&0.413&0.318&0.503&0.337\\
			&  ML-GCN\cite{chen2019multi}  & 0.318  & 0.183  & 0.450  & 0.320  & 0.617&0.435  \\
			&  SIMM\cite{wu2019multi}  & 0.376  & 0.208  & 0.358  & 0.216  & 0.568&0.395  \\
			&  GARDIS\cite{chen2020multi}  & 0.349  & 0.224  & 0.199    & 0.096  & 0.563&0.377 \\
			\cline{2-8}
			&ML-BVAE  & \textbf{0.294}  & \textbf{0.181}  & \textbf{0.517}  & \textbf{0.414} & \textbf{0.627}&\textbf{0.461}  \\
			\hline
			\hline
			\multirow{7}{*}{Subject4}
			& LR\cite{horikawa2020neural} &0.525 & 0.488 &0.212& 0.144&0.330&0.259\\
			& Benchmark & 0.315 & 0.191 & 0.444& 0.319&0.608&0.429\\
			&CA2E\cite{yeh2017learning}&0.454&0.256&0.415&0.319&0.511&0.343\\
			&  ML-GCN\cite{chen2019multi}  & 0.321  & 0.191  & 0.444  & 0.317  & 0.603&0.421  \\
			&  SIMM\cite{wu2019multi}  & 0.364  & 0.207  & 0.378  & 0.235  & 0.576 &0.406 \\
			&  GARDIS\cite{chen2020multi}  & 0.330  & 0.222  & 0.231    & 0.125  & 0.574&0.378 \\
			\cline{2-8}
			&ML-BVAE  & \textbf{0.308}  & \textbf{0.186}  & \textbf{0.505}  & \textbf{0.391} & \textbf{0.618}&\textbf{0.441}  \\
			\hline
			\hline
			\multirow{7}{*}{Subject5}
			& LR\cite{horikawa2020neural} &0.658 & 0.526 & 0.128& 0.095&0.299&0.227\\
			& Benchmark & 0.339 & 0.212 & 0.415& 0.295&0.584&0.407\\
			&CA2E\cite{yeh2017learning}&0.479&0.272&0.405&0.308&0.494&0.319\\
			&  ML-GCN\cite{chen2019multi}  & 0.352  & 0.211  & 0.407  & 0.291  & 0.583&0.390  \\
			&  SIMM\cite{wu2019multi}  & 0.398  & 0.226  & 0.344  & 0.188  & 0.551&0.378  \\
			&  GARDIS\cite{chen2020multi}  & 0.368  & 0.244  & 0.172    & 0.077  & 0.548&0.355 \\
			\cline{2-8}
			&ML-BVAE  & \textbf{0.329}  & \textbf{0.204}  & \textbf{0.482}  & \textbf{0.375} & \textbf{0.596}&\textbf{0.423}  \\
			\hline
			\hline
			
			\multirow{7}{*}{Average}
			& LR\cite{horikawa2020neural} & 0.576 & 0.503 & 0.177& 0.128&0.312&0.246\\
			& Benchmark & 0.319 &0.191  &0.443 &0.321 &0.608 &0.434\\
			&CA2E\cite{yeh2017learning}& 0.452 &0.258 &0.418 &0.323 & 0.511&0.341\\
			&  ML-GCN\cite{chen2019multi}  & 0.321  &  0.191 & 0.439  & 0.317  &0.606 &  0.422\\
			&  SIMM\cite{wu2019multi}  & 0.372 & 0.213 &0.371  & 0.222  &0.569 &0.398  \\
			&  GARDIS\cite{chen2020multi}  & 0.338  & 0.228  &  0.208   & 0.106  & 0.567& 0.379\\
			\cline{2-8}
			&ML-BVAE  & \textbf{0.302}  & \textbf{0.186}  &\textbf{0.505}   &  \textbf{0.398} & \textbf{0.619} &  \textbf{0.448}  \\
			\hline
			\hline
	\end{tabular}}
	\label{MEMO27_all}
\end{table}
\subsection{Hyperparameters setting} For the $\mathcal{L}_{ELBO}$ in Eq. (\ref{total_elbo}), $\left\lbrace \lambda_{l} , \lambda_{r},\lambda_{d} \right\rbrace $ are all set to 1 and $\beta$ is slowly annealed to 1 according to:
\begin{equation}
	\beta(i,j) = \frac{j+(i-1) M+1}{\alpha M},
\end{equation}
where $i$ and $j$ are the epoch and batch index, respectively, and $M$ is the batch size which is set to 100.  $\alpha$ is the annealing ratio which is set to 100.
$\left\lbrace \gamma^+,\gamma^- \right\rbrace $ in asymmetric focal loss are set to 0 and 1 respectively  according to original reference. $ \lambda_{1}/\lambda_{2}$ in Eq. (\ref{loss_function}) is searched in $\left\lbrace 0.001,0.01,0.1,1,2,5,10,15,20 \right\rbrace $. $N_{ROIF}$ is searched in $\left\lbrace 8,27,64,125,216,343\right\rbrace $.  The dimensionality of neural representations $\mathbf{z}$ is set to 64 in both datasets. We train the model  using the Adam optimizer, with weight decay of 0.1, $\left\lbrace \beta_1, \beta_2\right\rbrace =\left\lbrace 0.9, 0.9999\right\rbrace $, and a learning rate of $ 10^{-4}$. 
\begin{table}[H]
	\centering
	\caption{Multi-label emotion decoding performance of each compared approach in terms of One-Error, Ranking Loss, Micro F1, Macro F1, example-based Average Precision and mean Average Precision in MEMO80. $\uparrow$ ($\downarrow$) indicates the larger (smaller) the value, the better the performance.}
	\label{of1}
	\scriptsize
	\resizebox{\columnwidth}{!}{
		\begin{tabular}{c|c|cccccc} 
			\hline
			\hline
			Subject & Method & OneE$\downarrow$ & RL$\downarrow$ & miF1$\uparrow$& maF1$\uparrow$&e-AP$\uparrow$&mAP$\uparrow$\\
			\hline
			\multirow{7}{*}{Subject1}
			& LR\cite{horikawa2020neural} & 0.536 & 0.439 & 0.289& 0.281&0.362&0.359\\
			& Benchmark & 0.446 & 0.316 & 0.227& 0.205&0.462&0.427\\
			&CA2E\cite{yeh2017learning}&0.564&0.389&0.358&0.348&0.394&0.344\\
			&  ML-GCN\cite{chen2019multi} &0.460 & 0.303  & 0.226  &0.206   &  0.472 & 0.455 \\
			&  SIMM\cite{wu2019multi}  &  0.482 & 0.316  & 0.263  &0.242   &0.455 &0.454  \\
			&  GARDIS\cite{chen2020multi}& 0.490 & 0.332  & 0.133  & 0.134    &  0.437 &0.419  \\
			\cline{2-8}
			&ML-BVAE  & \textbf{0.416}  & \textbf{0.295}  & \textbf{0.384}  & \textbf{0.374} & \textbf{0.488}& \textbf{0.474} \\
			\hline
			\hline
			\multirow{7}{*}{Subject2}
			& LR\cite{horikawa2020neural} & 0.488 & 0.424 & 0.345& 0.333&0.392&0.379\\
			& Benchmark & 0.444 & 0.301 &0.264 & 0.245&0.479&0.457\\
			&CA2E\cite{yeh2017learning}&0.543&0.378&0.358&0.345&0.404&0.358\\
			&  ML-GCN\cite{chen2019multi}  & 0.440  & 0.294  & 0.259  &0.244   &0.487  &0.466 \\
			&  SIMM\cite{wu2019multi}  &  0.474 & 0.311  & 0.285  &0.265   &0.467  &0.459\\
			&  GARDIS\cite{chen2020multi}  & 0.489  & 0.329  & 0.168    &0.166   &0.489 &0.425 \\
			\cline{2-8}
			&ML-BVAE  & \textbf{0.398}  & \textbf{0.284} & \textbf{0.396}  & \textbf{0.388} & \textbf{0.503}& \textbf{0.490} \\
			\hline
			\hline
			\multirow{7}{*}{Subject3}
			& LR\cite{horikawa2020neural} & 0.521 & 0.432 & 0.310& 0.303&0.373&0.364\\
			& Benchmark & 0.446 & 0.298 & 0.252& 0.229&0.474&0.455\\
			&CA2E\cite{yeh2017learning}&0.623&0.403&0.363&0.348&0.379&0.324\\
			&  ML-GCN\cite{chen2019multi}  & 0.453 &0.292   &0.248   &0.232   &0.484 &0.467  \\
			&  SIMM\cite{wu2019multi}  &  0.468 & 0.307  & 0.264  &0.246   &0.465 & 0.453 \\
			&  GARDIS\cite{chen2020multi}  & 0.507  &0.334  & 0.162   &0.161   & 0.436 & 0.414\\
			\cline{2-8}
			&ML-BVAE  & \textbf{0.417}  & \textbf{0.287}  & \textbf{0.394}  & \textbf{0.386} & \textbf{0.499}&\textbf{0.497}  \\
			\hline
			\hline
			\multirow{7}{*}{Subject4}
			& LR\cite{horikawa2020neural} & 0.531 & 0.439 & 0.298& 0.284&0.366&0.346\\
			& Benchmark & 0.470 & 0.336 & 0.179& 0.161&0.446&0.395\\
			&CA2E\cite{yeh2017learning}&0.549&0.391&0.347&0.329&0.392&0.331\\
			&  ML-GCN\cite{chen2019multi}  &  0.472 & 0.326  &0.195   &0.180   &0.460 & 0.424 \\
			&  SIMM\cite{wu2019multi}  & 0.468 & 0.328  & 0.235  & 0.217  &0.453 &0.419  \\
			&  GARDIS\cite{chen2020multi}  & 0.532  & 0.356  & 0.145    & 0.146  & 0.419 & 0.387\\
			\cline{2-8}
			&ML-BVAE  & \textbf{0.425}  & \textbf{0.311}  & \textbf{0.362}  & \textbf{0.353} & \textbf{0.478}&\textbf{0.455}  \\
			\hline
			\hline
			\multirow{7}{*}{Subject5}
			& LR\cite{horikawa2020neural} & 0.512 & 0.438 & 0.298& 0.286&0.366&0.359\\
			& Benchmark & 0.476 & 0.320 & 0.217& 0.200&0.458&0.423\\
			&CA2E\cite{yeh2017learning}&0.575&0.396&0.338&0.326&0.389&0.341\\
			&  ML-GCN\cite{chen2019multi}  &  0.439 & \textbf{0.306}   & 0.215  &0.202  &0.478 & 0.450 \\
			&  SIMM\cite{wu2019multi}  &  0.473 & 0.319  & 0.260  & 0.244  & 0.460 &0.443\\
			&  GARDIS\cite{chen2020multi}  & 0.505  & 0.341  & 0.146    &0.146   & 0.431& 0.411\\
			\cline{2-8}
			&ML-BVAE  & \textbf{0.416}  & \textbf{0.306}  & \textbf{0.370}  & \textbf{0.362} & \textbf{0.485}&\textbf{0.463}  \\
			\hline
			\hline
			\multirow{7}{*}{Subject6}
			& LR\cite{horikawa2020neural} & 0.489 & 0.437 & 0.305& 0.291&0.375&0.355\\
			& Benchmark & 0.503 & 0.330 & 0.218& 0.196&0.447&0.406\\
			&CA2E\cite{yeh2017learning}&0.562&0.388&0.356&0.344&0.392&0.346\\
			&  ML-GCN\cite{chen2019multi}  &  0.461 & 0.313  & 0.191  &0.174  &0.466 &0.427  \\
			&  SIMM\cite{wu2019multi}  & 0.475 &0.324   & 0.250  & 0.231  &0.454 &0.431  \\
			&  GARDIS\cite{chen2020multi}  & 0.487  &0.341   & 0.133    &0.133   &0.431 &0.401 \\
			\cline{2-8}
			&ML-BVAE  & \textbf{0.427}  & \textbf{0.305}  & \textbf{0.377}  & \textbf{0.367} & \textbf{0.484}& \textbf{0.457} \\
			\hline
			\hline
			\multirow{7}{*}{Subject7}
			& LR\cite{horikawa2020neural} & 0.497 & 0.431 & 0.318& 0.308&0.382&0.364\\
			& Benchmark & 0.463 & 0.316 & 0.252& 0.229&0.465&0.438\\
			&CA2E\cite{yeh2017learning}&0.600&0.399&0.364&0.353&0.382&0.337\\
			&  ML-GCN\cite{chen2019multi} &0.441 & 0.301 & 0.245  &0.230   & 0.482 & 0.460 \\
			&  SIMM\cite{wu2019multi}  &  0.471 & 0.320 & 0.256 &0.239   &0.453 &0.449  \\
			&  GARDIS\cite{chen2020multi}  & 0.491  & 0.335 & 0.154    & 0.154  & 0.444 &0.425 \\
			\cline{2-8}
			&ML-BVAE  & \textbf{0.429}  & \textbf{0.299}  & \textbf{0.379}  & \textbf{0.368} & \textbf{0.489}&\textbf{0.479} \\
			\hline
			\hline
			\multirow{7}{*}{Subject8}
			& LR\cite{horikawa2020neural} & 0.529 & 0.440 & 0.288& 0.273&0.361&0.352\\
			& Benchmark & 0.473 & 0.325 & 0.224& 0.204&0.457&0.414\\
			&CA2E\cite{yeh2017learning}&0.584&0.390&0.335&0.322&0.390&0.342\\
			&  ML-GCN\cite{chen2019multi}  &  0.475 & 0.310  & 0.205  &0.193   &0.466 & 0.435 \\
			&  SIMM\cite{wu2019multi}  &  0.499 & 0.331  & 0.247  & 0.231  &0.448 & 0.435 \\
			&  GARDIS\cite{chen2020multi}  & 0.516  &0.347  & 0.131    &0.132   & 0.426&0.401 \\
			\cline{2-8}
			&ML-BVAE  & \textbf{0.467}  & \textbf{0.303}  & \textbf{0.364}  & \textbf{0.355} & \textbf{0.480}&\textbf{0.454}  \\
			\hline
			\hline
			\multirow{7}{*}{Average}
			& LR\cite{horikawa2020neural} & 0.513 & 0.435 & 0.306& 0.306&0.372&0.360\\
			& Benchmark &0.465  &0.318  &0.229 &0.209 &0.461 &0.427\\
			&CA2E\cite{yeh2017learning}&0.575 &0.392 &0.352&0.339 &0.390 &0.340\\
			&  ML-GCN\cite{chen2019multi}  & 0.455  & 0.306  &  0.223 & 0.208  &0.474 &0.448  \\
			&  SIMM\cite{wu2019multi}  & 0.476  &0.320  &0.258  &0.239   &0.457 & 0.443 \\
			&  GARDIS\cite{chen2020multi}  & 0.502  & 0.339  & 0.147    &  0.147 & 0.439 &0.410 \\
			\cline{2-8}
			&ML-BVAE  & \textbf{0.424}  & \textbf{0.298}  & \textbf{0.378}  & \textbf{0.369}  & \textbf{0.488} & \textbf{0.471} \\
			\hline
			\hline
	\end{tabular}}
	\label{MEMO80_all}
\end{table}
\subsection{Evaluation metrics}
We employ six widely used metrics for measuring multi-label emotion decoding performance, including \textbf{One-Error} (OneE), \textbf{Ranking Loss} (RL), \textbf{Micro F1} (miF1), \textbf{Macro F1} (maF1), \textbf{exampled-based Average Precision} (e-AP) \cite{zhang2013review} and \textbf{mean Average Precision}\footnote{e-AP and mAP are two totally different metrics proposed by different references despite of the similar name. We add an 'e' to the former for disambiguation.} (mAP) \cite{chen2021learning}. For OneE and RL, the smaller the values the better the performance. For the other four metrics, the larger the values the better the performance. In these metrics, OneE, RL and e-AP are example-based metrics which can evaluate the model performance on each test example separately and then return the mean value across the test set, while miF1, maF1 and mAP are label-based metrics which are able to evaluate the model performance on each emotion category separately, and then return the mean value across all emotion categories. Among these, miF1, maF1 and mAP are the primary metrics considering their comprehensiveness.

\subsection{Experimental Results}

\subsubsection{Multi-label emotion decoding performance}
For MEMO27, 10-fold cross-validation is performed for each subject where the mean results of each metric are recorded for all compared approaches. For MEMO80, we follow the setting in \cite{koide2020distinct} leading to 3600 samples in the training dataset and 1800 samples in the test dataset, which can ensure that all the stimuli used in the test dataset have not been seen during training. 

Detailed experimental results are reported in Tables \ref{MEMO27_all} and \ref{MEMO80_all}. Compared with other methods, ML-BVAE shows obvious superiority in both datasets. It outperforms all the compared methods in all six metrics regarding total five subjects in MEMO27 and achieves the best performance in most cases in MEMO80. It is also noticeable that ML-BVAE suppresses all the compared algorithms as far as the primary metrics miF1,  maF1 and mAP among the 39 configurations (13 subjects $\times$ 3 metrics). Specifically, for MEMO27 our method has a relative improvement of 12.3\% and 18.8\% on miF1 and maF1 respectively (five subjects averaged) compared with the second place method and a relative improvement of 19.0\% and 17.1\% can be reached for MEMO80, which is a significant improvement in the neural decoding area.

Among the compared approaches, LR shows the least superiority for multi-label emotion decoding task especially in MEMO27 due to its relatively low representation learning ability and neglect of label correlations. ML-GCN and Benchmark are two relatively strong baselines which can rank 2nd or 3rd for most cases. CA2E shows some superiority with regard to miF1 and maF1. Surprisingly, SIMM and GARDIS, as two multi-view multi-label learning methods, perform worse than other single-view baselines and far worse than ML-BVAE. Between them, GARDIS achieves the worst results perhaps because it is the only method without deep learning. This shows that emotion decoding task needs to learn more expressive neural representations which is consistent with the original intention of our hybrid model design.

\begin{table}[H]
	\centering
	\caption{Friedman statistics $F_F$ in terms of each metric and the critical value at 0.05 significance level. ($\#$ compared algorithms $k=6$, $\#$ subjects $N=13$ for MEMO27 and MEMO80.)}
	\resizebox{\columnwidth}{!}{
		\begin{tabular}{cccccc|c} 
			\hline
			\hline
			OneE &RL& miF1 & maF1&e-AP&mAP& critical value\\
			\hline	
			73.306 &243.328& 40.616& 60.651&106.300&118.719& 2.368\\
			\hline
			\hline
	\end{tabular}}
	\label{friedman}
\end{table}
\begin{figure*}
	\centering
	\subfigure[ OneE] {\includegraphics[height=0.6in,width=2.3in]{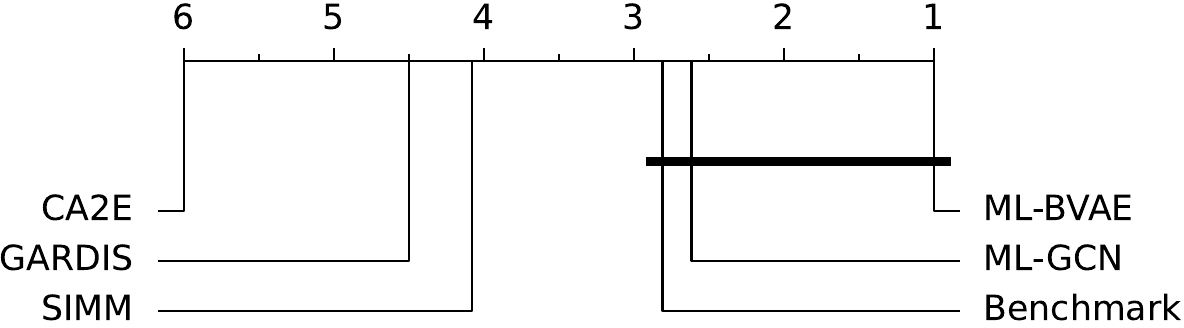}}
	\subfigure[ RL] {\includegraphics[height=0.6in,width=2.3in]{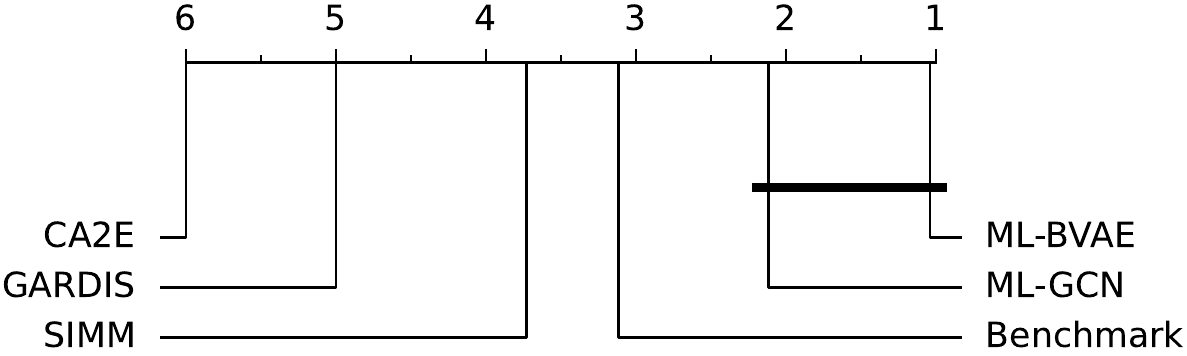}}
	\subfigure[ miF1] {\includegraphics[height=0.6in,width=2.3in]{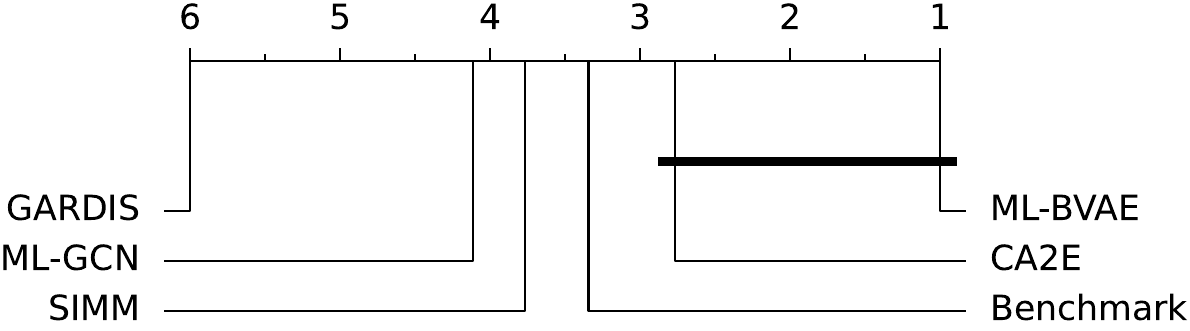}}
	\subfigure[ maF1] {\includegraphics[height=0.6in,width=2.3in]{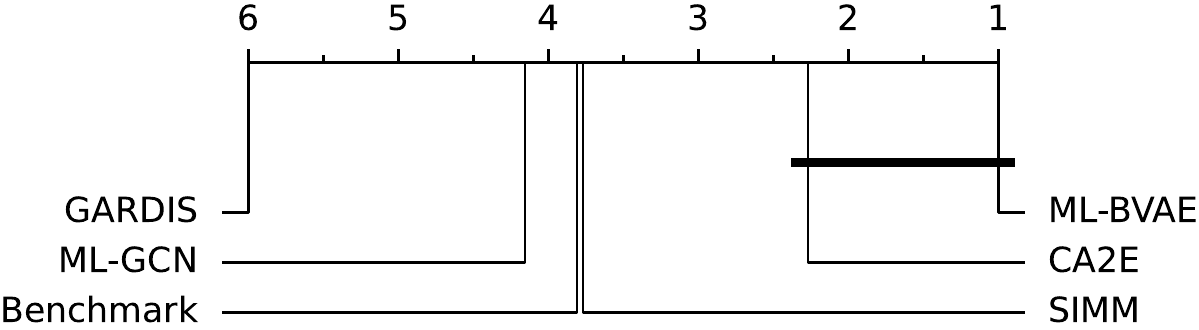}}
	\subfigure[ e-AP] {\includegraphics[height=0.6in,width=2.3in]{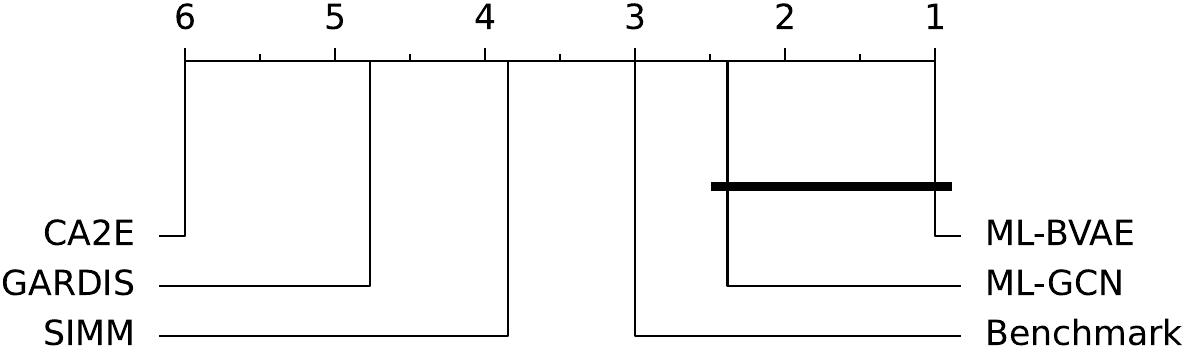}}
	\subfigure[ mAP] {\includegraphics[height=0.6in,width=2.3in]{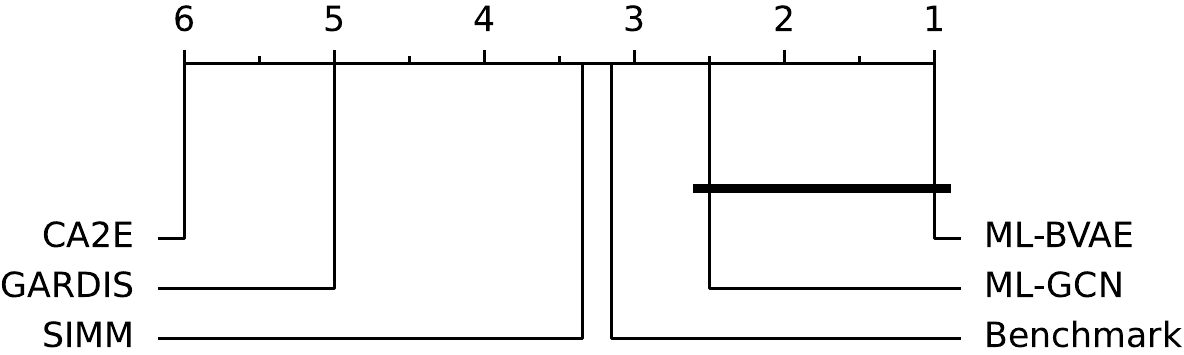}}
	\caption{Comparisons of ML-BVAE (control algorithm) against other comparing algorithms with the Bonferroni-Dunn test in 13 subjects of MEMO27 and MEMO80. Algorithms not connected with ML-BVAE in the CD diagram are considered to have significantly different performance from the control algorithm (CD=1.890 at 0.05 significance level).}
	\label{bd}
	\vspace{-0.5cm}
\end{figure*}

Furthermore, we adopt the Friedman test \cite{demvsar2006statistical} for statistical test in order to discuss the relative performance among the compared methods. If there are $k$ algorithms and $N$ datasets (the number of subjects in our experiment), we take use of the average ranks of algorithms $R_j = \frac{1}{N}\sum_{i}r_i^j$ for Friedman test in which $r_i^j$ is the ranks of the $j$-th algorithm on the $i$-th dataset. If the null-hypothesis is that all the algorithms have the equivalent performance, the Friedman statistic $F_F$ which will satisfy the F-distribution with $k-1$ and $(k-1)(N-1)$ degrees of freedom can be written as:
\begin{equation}
	F_F = \frac{(N-1)\chi_F^2}{N(k-1)-\chi_F^2},
\end{equation}
in which
\begin{equation}
	\chi_F^2 = \frac{12N}{k(k+1)}\Bigg[\sum_{j=1}^{k}R_j^2-\frac{k(k+1)^2}{4}\Bigg].
\end{equation}
Table \ref{friedman} shows the Fridman statistics $F_F$ and the corresponding critical value in regard to each metric ($\#$ comparing algorithms $k=6$ (except \textbf{LR})\footnote{LR is not considered in Friedman test and post-hoc Bonferroni-Dunn test since it is not a single model but one model per emotion category. We only explore multi-label learning methods in this section.}, $\#$ subjects $N=13$). With respect to each metric, the null hypothesis of equivalent performance among the compared methods can be rejected at the 0.05 significance level.

Then, we perform the strict post-hoc Bonferroni-Dunn test \cite{demvsar2006statistical} which is used to account for the relative performance between ML-BVAE (control algorithm) and other compared approaches. The critical difference (CD) value of the rank difference between two algorithms is: 
\begin{equation}
	CD=q_{\alpha}\sqrt{\frac{k(k+1)}{6N}},
\end{equation}
in which $q_{\alpha}=2.576$ at 0.05 significance level. Therefore, ML-BVAE can be considered as having significantly different performance than a compared algorithm if their average ranks difference is larger than CD (CD=1.890 in our experimental setting). Fig. \ref{bd} reports the CD diagrams on each metric, where the average rank of each compared method is marked along the axis (the smaller the better). Algorithms not connected with ML-BVAE in the CD diagram are considered to have significantly different performance from the control algorithm. We can observe that: (1) ML-BVAE achieves the best average rank with respect to all metrics. (2) As far as the primary metrics, ML-BVAE is significantly better than the compared methods other than CA2E in terms of miF1 and maF1 and achieves the significant best results in mAP compared with other methods rather than ML-GCN. (3) Although some strong baselines such as ML-GCN are not significantly different from ML-BVAE in terms of a few metrics which are of little importance, our method still has obvious superiority when all metrics are taken into consideration and achieves a consistently better average ranks. These experimental results convincingly illustrate the significance of the superiority of our ML-BVAE approach.
\begin{table*}
	\centering
	\caption{Comparisons of AP and mAP in $\%$ of ML-BVAE and compared methods in MEMO27 (five subjects averaged). A: Admiration; B: Adoration; C: Aesthetic Appreciation; D: Amusement; E: Anger; F: Anxiety; G: Awe; H: Awkwardness; I: Boredom; J: Calmness; K: Confusion; L: Craving; M: Disgust; N: Empathic Pain; O: Entrancement; P: Excitement; Q: Fear; R: Horror; S: Interest; T: Joy; U: Nostalgia; V: Relief; W: Romance; X: Sadness; Y: Satisfaction; Z: Sexual Desire; $\Omega$: Surprise.}
	\vspace{-0.1cm}
	\tiny
	\resizebox{\textwidth}{!}{
		\begin{tabular}{@{\,}c@{\,} *{27}{@{\,}c@{\,}}@{\,}c@{\,}}
			\hline
			\hline
			Method&A &B &C &D &E &F &G &H &I &J &K &L &M &N &O &P &Q &R &S &T &U&V&W&X&Y&Z & $\Omega$ &mAP\\
			\hline
			\rowcolor{mygray}[1.5pt][1.5pt]
			LR\cite{horikawa2020neural}&  16.7  & 18.8  & 45.5  & 58.7  & 3.1   & 33.3  & 35.7  & 6.2   & 17.9  & 8.9   & 15.8  & 8.6   & 42.4  & 12.2  & 23.7  & 20.6  & 38.3  & 32.4  & 36.8  & 26.5  & 17.5  & 7.8   & 18.7  & 14.2  & 17.7  & 50.7  & 35.8 &24.6  \\
			Benchmark& \textbf{26.1}  & 51.6  & 66.7  & 76.1  & \textbf{24.1}  & 51.8  & 59.4  & \textbf{23.3}  & 26.3  & 32.1  & 33.1  & 27.9  & 57.4  & 32.0  & 51.0  & 31.8  & 57.8  & 51.5  & 59.2  & 53.5  & 28.7  & 22.3  & 41.5  & 34.2  & 26.3  & 70.0  & 56.4  & 43.4  \\
			\rowcolor{mygray}[1.5pt][1.5pt]
			CA2E\cite{yeh2017learning}&20.3  & 37.4  & 50.7  & 66.8  & 16.3  & 40.4  & 49.8  & 17.9  & 21.8  & 23.4  & 26.8  & 21.2  & 42.7  & 23.6  & 37.4  & 24.0  & 44.1  & 39.7  & 51.3  & 41.4  & 22.0  & 20.1  & 31.2  & 26.0  & 21.3  & 57.5  & 45.8  & 34.1  \\
			ML-GCN\cite{chen2019multi} & 25.7  & 51.2  & 66.4  & 76.5  & 19.9  & 51.9  & 58.7  & 21.4  & 25.5  & 29.8  & 34.2  & 25.0  & 56.6  & 29.2  & 50.6  & 30.1  & 57.2  & 50.4  & 61.1  & 52.7  & 27.7  & 19.5  & 36.9  & 32.4  & 25.3  & 65.6  & 56.8  & 42.2  \\
			\rowcolor{mygray}[1.5pt][1.5pt]
			SIMM\cite{wu2019multi}  & 25.0  & 42.1  & 61.2  & 73.8  & 19.1  & 50.3  & 54.4  & 21.1  & 24.2  & 30.2  & 32.0  & 15.2  & 54.1  & 30.8  & 47.7  & 28.2  & 53.6  & 46.9  & 59.3  & 47.3  & 27.6  & 22.3  & 33.1  & 30.4  & 26.1  & 65.3  & 54.2  & 39.8  \\
			GARDIS\cite{chen2020multi}  & 19.3  & 43.4  & 62.9  & 73.8  & 11.7  & 47.4  & 54.9  & 15.8  & 20.4  & 28.1  & 25.4  & 24.2  & 53.5  & 25.3  & 48.0  & 22.8  & 52.5  & 44.7  & 61.0  & 48.3  & 26.1  & 14.4  & 31.8  & 26.1  & 19.0  & 67.2  & 55.1  & 37.9  \\
			\hline
			\rowcolor{mygray}[1.5pt][1.5pt]
			ML-BVAE & 25.9 & \textbf{53.3} & \textbf{68.4} & \textbf{76.9} & 23.5 & \textbf{55.1} & \textbf{60.2} & 22.9 & \textbf{26.7} & \textbf{33.7} & \textbf{34.7} & \textbf{29.8} & \textbf{58.4} & \textbf{33.5} & \textbf{52.6} & \textbf{32.1} & \textbf{59.9} & \textbf{53.4} & \textbf{61.4} & \textbf{54.7} & \textbf{30.7} & \textbf{22.9} & \textbf{45.4} & \textbf{36.0} & \textbf{26.4} & \textbf{73.9} & \textbf{57.8 } & \textbf{44.8} \\
			\hline
			\hline		
	\end{tabular}}
	\vspace{-0.2cm}
	\label{map1}
\end{table*}
\begin{figure}
	\centering
	\includegraphics[height=1.7in,width=3.5in]{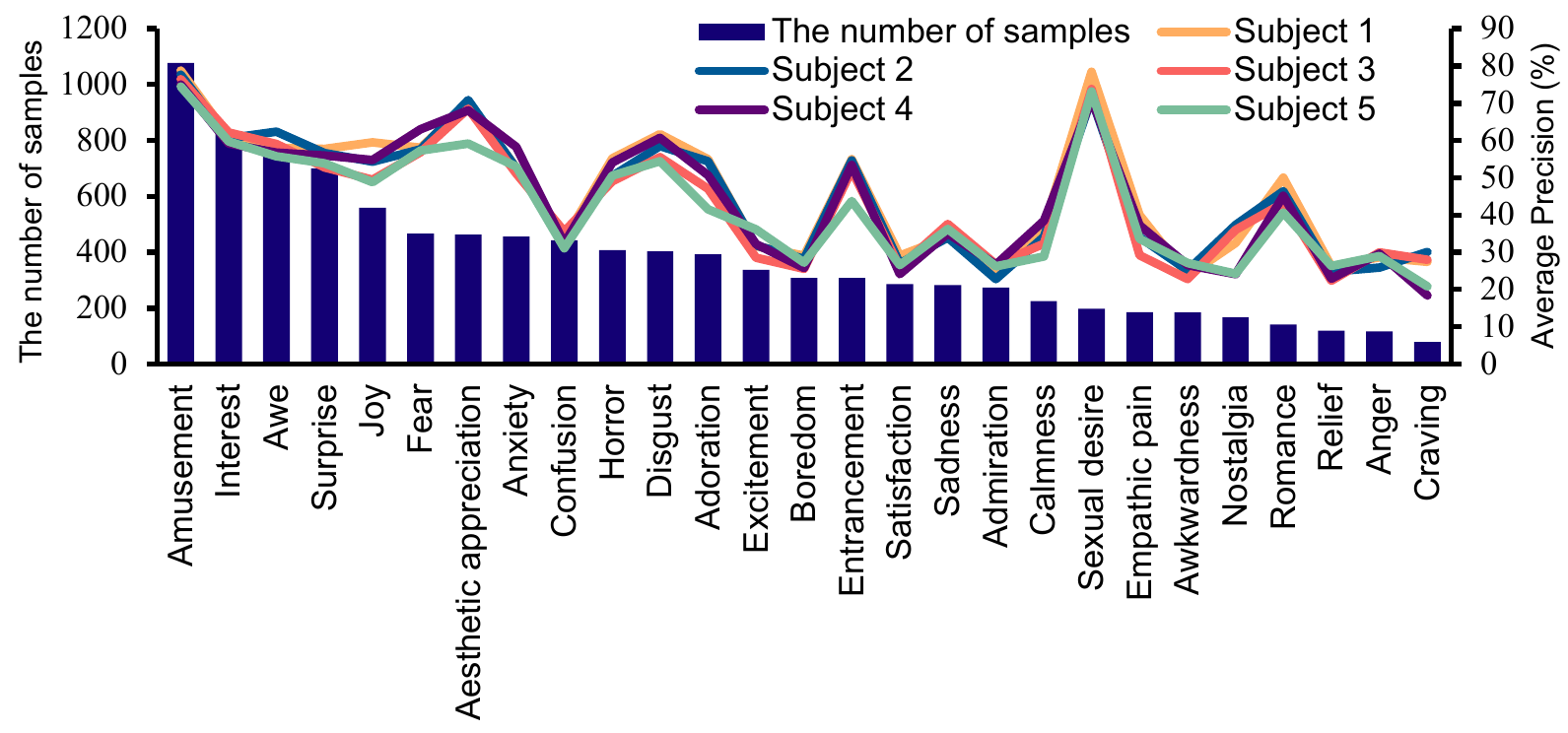}
	\caption{A bar graph of the number of samples corresponding to each emotion category and line graphs of each subject's emotional decoding AP in MEMO27.}
	\vspace{-0.2cm}
	\label{ld_27}
	\vspace{-0.2cm}
\end{figure}
\begin{table*}[t]
	\centering
	\caption{Comparisons of AP and mAP in $\%$ of ML-BVAE and compared methods in MEMO80 (eight subjects averaged). 1.Love; 2.Amusement; 3.Craving; 4.Joy; 5.Nostalgia;	6.Boredom; 7.Calmness; 8.Relief; 9.Romance;	10.Sadness;	11.Admiration; 12.Aesthetic Appreciation; 13.Awe;	14.Confusion; 15.Entrancement; 16.Interest; 17.Satisfaction; 18.Excitement; 19.Sexual Desire; 20.Surprise; 21.Nervousness; 22.Tension; 23.Anger; 24.Anxiety; 25.Awkwardness; 26.Disgust; 27.Empathic Pain; 28.Fear; 29.Horror; 30.Laughing;	31.Happiness; 32.Friendliness; 33.Ridiculousness; 34.Affection; 35.Liking; 36.Shedding Tears; 37.Emotional Hurt; 38.Sympathy; 39.Lethargy; 40.Empathy; 41.Compassion; 42.Curiousness; 43.Unrest; 44.Exuberance; 45.Appreciation of Beauty;	46.Fever; 47.Scare; 48.Daze; 49.Positive-Expectation; 50.Throb; 51.Sexiness; 52.Indecency; 53.Embarrassment; 54.Oddness; 55.Contempt; 56.Alertness; 57.Eeriness; 58.Positive-Emotion; 59.Vigor; 60.Longing; 61.Tenderness; 62.Pensiveness; 63.Melancholy; 64.Relaxedness; 65.Acceptance;	66.Unease; 67.Negative-Emotion; 68.Hostility; 69.Levity; 70.Protectiveness; 71.Elation; 72.Coolness; 73.Cuteness; 74.Attachment; 75.Encouragement; 76.Annoyance; 77.Positive-Fear; 78.Aggressiveness; 79.Distress; 80.Stress.}
	\vspace{-0.1cm}
	\tiny
	\resizebox{\textwidth}{!}{
		\begin{tabular}{@{\,}c@{\,} *{27}{@{\,}c@{\,}}}
			\hline
			\hline
			Method&1 &2 &3 &4 &5 &6 &7 &8 &9 &10 &11 &12 &13 &14 &15 &16 &17 &18 &19 &20 &21&22&23&24&25&26 & 27 \\
			\hline
			\rowcolor{mygray}[1.5pt][1.5pt]
			LR\cite{horikawa2020neural}& 36.7  & 42.3  & 25.8  & 39.6  & 46.3  & 44.9  & 26.3  & 35.2  & 26.3  & 35.1  & 15.4  & 34.0  & 43.9  & 42.7  & 47.1  & 47.1  & 37.9  & 44.2  & 39.9  & 66.6  & 42.7  & 38.7  & 16.8  & 50.9  & 37.9  & 38.5  & 34.4  \\

			Benchmark& 43.7  & 48.7  & 27.8  & 50.1  & 59.5  & 45.9  & \textbf{35.9}  & 42.2  & 32.0  & 33.0  & \textbf{17.8}  & 42.6  & 54.1  & 54.5  & 41.3  & 52.6  & 39.9  & 50.7  & 29.4  & 69.6  & 52.5  & 46.0  & 26.2  & 59.5  & 26.4  & 54.7  & 41.7  \\
			\rowcolor{mygray}[1.5pt][1.5pt]
			CA2E\cite{yeh2017learning}&36.0  & 43.1  & 28.6  & 47.2  & 44.0  & 34.5  & 30.0  & 36.2  & 28.4  & 33.1  & 14.9  & 40.1  & 40.3  & 38.0  & 43.0  & 45.9  & 36.6  & 37.7  & 28.4  & 58.8  & 34.6  & 30.0  & 21.3  & 40.7  & 30.9  & 34.5  & 30.3  \\
			
			ML-GCN\cite{chen2019multi} & 43.5  & 50.5  & 30.1  & 49.3  & 61.9  & 47.7  & 35.7  & 42.0  & 32.7  & 36.8  & 16.9  & 42.6  & 57.8  & 55.3  & 48.2  & 53.8  & 39.7  & 55.6  & 50.5  & \textbf{70.3}  & 55.0  & 47.3  & 26.5  & 60.9  & 41.3  & 55.2  & 43.4  \\
			\rowcolor{mygray}[1.5pt][1.5pt]
			SIMM\cite{wu2019multi}  & 39.3  & 49.7  & 31.8  & 48.8  & 61.9  & 49.8  & 33.3  & 41.4  & 30.9  & 37.3  & 16.3  & 41.7  & 56.3  & 54.0  & 48.9  & 52.2  & 37.7  & 57.1  & 56.1  & 69.8  & 54.7  & 45.7  & 25.8  & 58.0  & 44.8  & 53.5  & 44.4  \\
			GARDIS\cite{chen2020multi} &35.9  & 47.4  & 30.6  & 44.8  & 54.7  & 45.5  & 30.7  & 40.3  & 29.4  & 33.4  & 14.9  & 40.8  & 52.1  & 48.8  & 46.6  & 51.2  & 36.5  & 52.2  & 55.2  & 67.3  & 49.2  & 40.2  & 21.4  & 53.0  & 38.5  & 48.5  & 43.7  \\
			\hline
			\rowcolor{mygray}[1.5pt][1.5pt]
			ML-BVAE & \textbf{44.7}  & \textbf{51.9}  & \textbf{36.6}  & \textbf{50.5}  & \textbf{62.6}  & \textbf{53.0}  & 34.8  & \textbf{43.7}  & \textbf{34.1}  & \textbf{39.9}  & 17.7  & \textbf{43.6}  & \textbf{59.8}  & \textbf{56.3}  & \textbf{50.9}  & \textbf{54.1}  & \textbf{40.5}  & \textbf{60.2}  & \textbf{61.1}  & \textbf{70.3}  & \textbf{56.3}  & \textbf{48.0}  & \textbf{28.8}  & \textbf{61.2}  & \textbf{45.2}  & \textbf{57.8}  & \textbf{46.3}  \\
			\hline
			\hline		
			Method&28 &29  &30    &31    &32    &33    &34    &35    &36    &37    &38    &39    &40    &41    &42    &43    &44    &45    &46    &47    &48    &49    &50    &51    &52    &53    &54  \\
			\hline
			\rowcolor{mygray}[1.5pt][1.5pt]
			LR\cite{horikawa2020neural}& 41.5  & 42.3  & 23.9  & 30.6  & 31.8  & 36.0  & 23.3  & 26.5  & 24.3  & 38.6  & 43.7  & 50.5  & 29.2  & 32.0  & 31.1  & 38.9  & 42.1  & 24.8  & 18.9  & 40.4  & 46.1  & 47.7  & 40.6  & 40.8  & 52.7  & 31.7  & 26.7  \\
			Benchmark& 56.9  & 56.9  & 19.2  & 37.6  & \textbf{41.1}  & 38.8  & 34.0  & 35.7  & 28.0  & 35.4  & 49.5  & 55.8  & 37.7  & 46.0  & 26.2  & 55.3  & 48.2  & 34.5  & 30.2  & 51.8  & 55.3  & 50.5  & 48.3  & 36.4  & 53.0  & 31.7  & 32.3  \\
			\rowcolor{mygray}[1.5pt][1.5pt]
			CA2E\cite{yeh2017learning}&33.6  & 38.9  & 20.8  & 30.6  & 31.2  & 30.8  & 28.6  & 29.4  & 25.4  & 37.5  & 40.0  & 56.8  & 30.1  & 31.1  & 28.7  & 38.0  & 39.4  & 29.7  & 27.3  & 34.9  & 41.2  & 46.7  & 34.2  & 35.9  & 43.1  & 26.2  & 25.9  \\
			ML-GCN\cite{chen2019multi} & 58.3  & 58.5  & 26.8  & 37.9  & 40.5  & 41.6  & 32.9  & 33.7  & 31.3  & 39.7  & 50.3  & 59.1  & 37.9  & 45.7  & 34.7  & 56.7  & \textbf{50.5}  & 33.5  & \textbf{33.3}  & 52.6  & 56.1  & 52.9  & 50.1  & 55.1  & 57.0  & 40.6  & 32.7\\	
			\rowcolor{mygray}[1.5pt][1.5pt]	
			SIMM \cite{wu2019multi} & 57.5  & 57.9  & 27.3  & 37.3  & 38.0  & 41.6  & 30.1  & 33.0  & 29.9  & 41.1  & 49.5  & \textbf{59.2}  & 35.7  & 46.0  & 37.6  & 55.7  & 48.2  & 31.6  & 31.8  & 51.6  & 55.0  & 52.9  & 48.2  & 59.2  & 59.2  & 43.2  & 32.5  \\
			GARDIS\cite{chen2020multi} & 50.2  & 53.2  & 24.0  & 34.9  & 34.4  & 40.5  & 28.3  & 31.9  & 30.1  & 39.3  & 46.7  & 59.1  & 33.6  & 35.9  & 34.8  & 51.1  & 44.5  & 29.6  & 29.2  & 47.9  & 52.0  & 50.3  & 44.5  & 57.7  & 56.3  & 44.0  & 29.4  \\
			\hline
			\rowcolor{mygray}[1.5pt][1.5pt]
			ML-BVAE & \textbf{60.7}  & \textbf{60.7}  & \textbf{31.7}  & \textbf{38.8}  & 40.7  & \textbf{49.4}  & \textbf{34.4}  & \textbf{36.5}  & \textbf{38.1}  & \textbf{47.5}  & \textbf{52.7}  & 57.8  & \textbf{39.4}  & \textbf{46.4}  & \textbf{37.8}  & \textbf{58.1}  & 49.3  & \textbf{35.0}  & 32.8  & \textbf{53.4}  & \textbf{57.2}  & \textbf{53.6}  & \textbf{51.9}  & \textbf{63.8}  & \textbf{61.9}  & \textbf{51.7}  & \textbf{35.3}  \\
			\hline
			\hline		
			Method&55    & 56    & 57    & 58    & 59    & 60    & 61    & 62    & 63    & 64    & 65    & 66    & 67    & 68    & 69    & 70    & 71    & 72    & 73    & 74    & 75    & 76    & 77    & 78    & 79    & 80    &mAP\\
			\hline
			\rowcolor{mygray}[1.5pt][1.5pt]
			LR\cite{horikawa2020neural}& 29.3  & 24.6  & 41.7  & 36.0  & 40.2  & 32.9  & 41.6  & 28.7  & 32.3  & 36.6  & 26.6  & 32.9  & 41.8  & 36.3  & 38.4  & 27.7  & 19.1  & 41.1  & 22.1  & 20.9  & 28.4  & 40.1  & 44.3  & 38.8  & 36.4  & 45.5&36.0  \\
			
			Benchmark& 38.3  & 37.0  & 57.0  & 40.4  & 51.5  & 35.6  & 50.2  & 35.1  & 38.5  & 47.0  & \textbf{37.0}  & 35.5  & 59.8  & 38.9  & 44.7  & 22.8  & 27.1  & 40.4  & \textbf{28.9}  & 27.2  & 38.8  & \textbf{57.5}  & 59.9  & \textbf{53.0}  & 47.2  & 60.6  & 42.7  \\
			\rowcolor{mygray}[1.5pt][1.5pt]
			CA2E\cite{yeh2017learning}&33.3  & 18.5  & 31.8  & 36.5  & 45.4  & 31.1  & 42.1  & 31.2  & 27.9  & 41.6  & 28.7  & 27.1  & 39.6  & 31.3  & 33.9  & 25.2  & 22.4  & 36.3  & 23.9  & 21.2  & 33.4  & 32.7  & 36.1  & 37.4  & 32.7  & 39.0  & 34.0  \\	
			ML-GCN\cite{chen2019multi} & 39.8  & 38.6  & 58.4  & 42.3  & 51.5  & 34.1  & 50.0  & 36.7  & 39.4  & 47.0  & 36.2  & \textbf{37.4}  & 61.5  & 43.1  & 47.7  & 30.0  & 27.8  & 43.5  & 28.3  & \textbf{27.7}  & 40.0  & 56.5  & \textbf{61.7}  & 51.9  & 47.1  & \textbf{61.8}  & 44.9  \\
			\rowcolor{mygray}[1.5pt][1.5pt]
			SIMM \cite{wu2019multi} & 40.8  & 38.0  & 55.9  & 42.8  & 49.6  & 33.2  & 46.2  & 36.0  & 38.9  & 46.2  & 33.1  & 36.9  & 60.4  & 45.6  & 49.7  & 31.9  & 25.7  & 43.4  & 26.2  & 25.0  & 39.2  & 53.1  & 59.7  & 49.8  & 46.0  & 59.5  & 44.3  \\
			GARDIS\cite{chen2020multi} &  37.0  & 35.1  & 48.9  & 40.7  & 46.9  & 33.0  & 43.5  & 34.6  & 33.9  & 43.3  & 30.6  & 32.9  & 55.0  & 37.8  & 46.8  & 28.1  & 23.9  & 40.1  & 23.8  & 23.4  & 36.9  & 46.9  & 53.5  & 44.9  & 41.8  & 54.5  & 41.0  \\
			\hline
			\rowcolor{mygray}[1.5pt][1.5pt]
			ML-BVAE & \textbf{41.8}  & \textbf{41.2}  & \textbf{58.5}  & \textbf{45.1}  & \textbf{52.0}  & \textbf{35.8}  & \textbf{52.4}  & \textbf{40.0}  & \textbf{39.8}  & \textbf{48.4}  & 36.7  & \textbf{37.4}  & \textbf{62.6}  & \textbf{48.0}  & \textbf{51.7}  & \textbf{35.1}  & \textbf{30.1}  & \textbf{46.6}  & 28.1  & 27.2  & \textbf{43.1}  & 57.0  & 61.2  & 52.1  & \textbf{49.0}  & 61.5  & \textbf{47.1}  \\
			\hline
			\hline		
	\end{tabular}}
	\label{map2}
\end{table*}

\begin{figure*}[h]
	\centering
	\includegraphics[height=1.5in,width=7.2in]{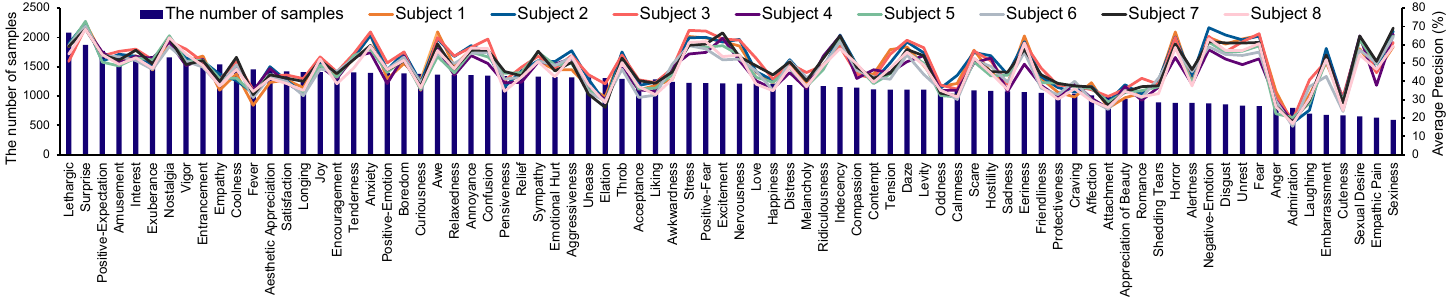}
	\caption{A bar graph of the number of samples corresponding to each emotion category and line graphs of each subject's emotional decoding AP in MEMO80.}
	\label{ld_80}
	\vspace{-0.6cm}
\end{figure*}
\subsubsection{Average precision across all categories}
We also report the average precision\footnote{This average precision used in this section refers to the intermediate result when calculate the label-based metric mAP.} across all emotion categories with respect to MEMO27 and MEMO80 in Tables \ref{map1} and \ref{map2}. The results for each specific emotion label can be found in the bar graphs of Appendix B. We can observe that ML-BVAE outperforms other compared methods in most emotion categories in both datasets. To be specific, ML-BVAE ranks 1st and 2nd in 92.6\% (25/27) and 7.4\% (2/27) respectively in MEMO27, and 83.8\% (67/80) and 13.8\% (11/80) respectively in MEMO80. To further analysis the decoding performance of ML-BVAE in each emotion category, we show the average precision corresponding to each emotional state in order of sample size for each category in Figs. \ref{ld_27} and \ref{ld_80}. Based on these results, we can make the following observations: (1) The experimental results show a strong consistency on multiple subjects in each dataset which indicates that the decoding performance of ML-BVAE is stable to some extent in the face of individual differences in fMRI data. (2) The number of samples is a very important factor that affects the decoding performance. MEMO27 suffers from the long tail distribution of labels leading to lower decoding performance in the categories with a smaller number of samples in general. (3) The intensity of emotional stimuli is another factor which has an effect on emotion decoding. For certain strong emotions such as \emph{Sexual desire} and \emph{Romance} in MEMO27 and \emph{Sexiness} in MEMO80, they still have high decoding accuracy despite of small samples. (4) The label quality also exerts influence on the decoding accuracy. For example, in MEMO80 \emph{Lethargic} and \emph{Fever} have the lowest consistency in emotion ratings among annotators \cite{koide2020distinct}, in which \emph{Fever} has very low decoding accuracy while \emph{Lethargic} has not significant bad performance due to large sample size. We can also find in Fig. \ref{attention} that \emph{Confusion} in MEMO27 and \emph{Admiration} in MEMO80 are both noisy labels because they have no obvious co-occurrence with other labels, which leads to low accuracy. (5) Although the number of decoding categories of MEMO80 is much greater than that of MEMO27 leading to a harder decoding problem, it still achieves considerable decoding accuracy in terms of mAP. This may be due to the higher quality of fMRI data thanks to extra auditory stimulation in the movies of MEMO80.
\begin{figure}
	\centering
	\subfigure[ Masked self-attention martix in MEMO27.] {\includegraphics[height=1.4in,width=1.7in]{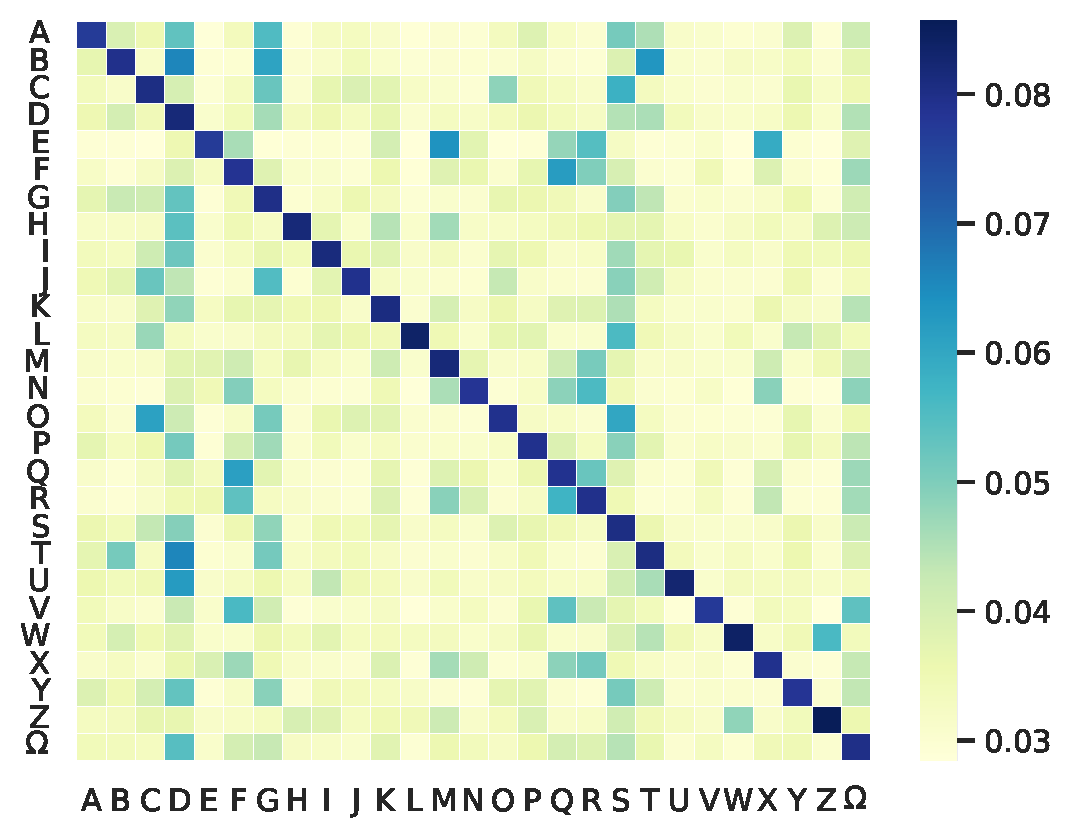}}
	\subfigure[ Masked self-attention martix in MEMO80.] {\includegraphics[height=1.4in,width=1.7in]{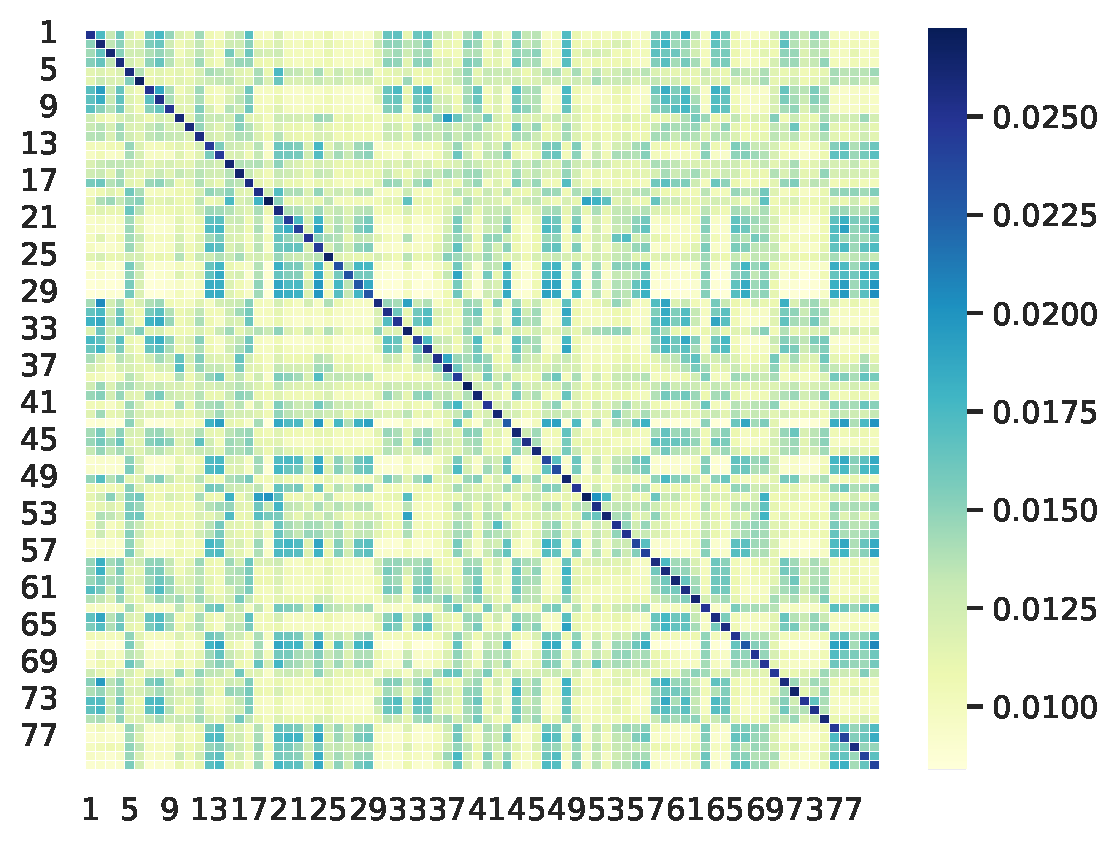}}
	\caption{Visualization of the masked self-attention matrix in MEMO27 and MEMO80. The indexes of emotion labels in (a) and (b) are the same as Figs. \ref{map1} and \ref{map2}.}
	\label{attention}
\end{figure}
\begin{figure}
	\centering
	\subfigure[ \emph{Adoration} in MEMO27.]{\includegraphics[height=0.35in,width=3.5in]{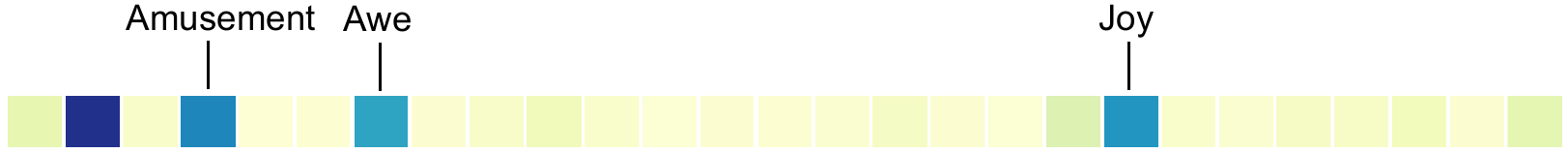}}
	\subfigure[ \emph{Anger} in MEMO27.]{\includegraphics[height=0.55in,width=3.5in]{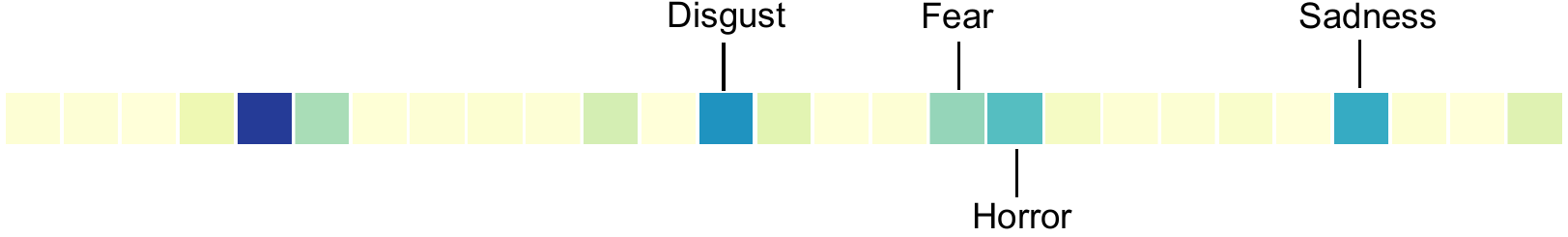}}
	\subfigure[ \emph{Love} in MEMO80.]{\includegraphics[height=0.51in,width=3.5in]{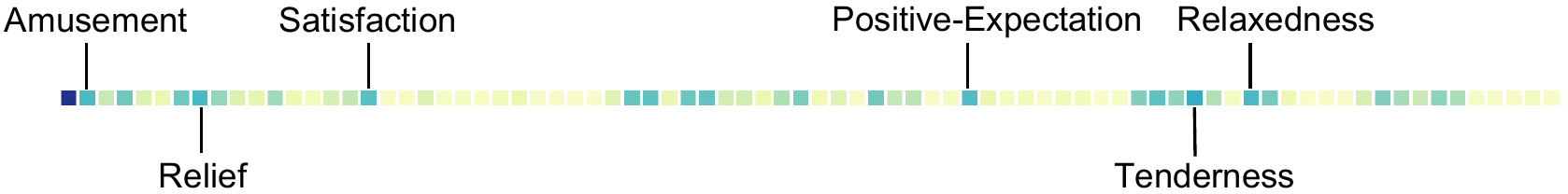}}
	\subfigure[ \emph{Nervous} in MEMO80.]{\includegraphics[height=0.55in,width=3.5in]{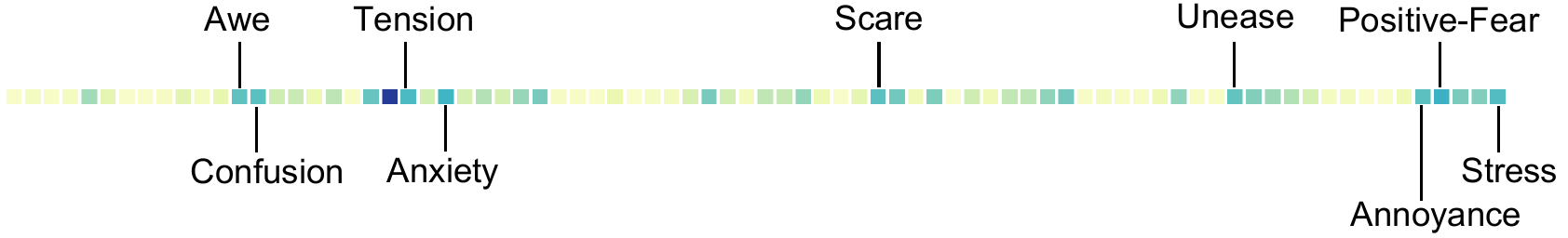}}
	\caption{Visualization of a few rows of the masked self-attention matrix. Each element can be regarded as the probability of co-occurrence with the emotion corresponding to the row.}
	\label{attention_bar}
	\vspace{-0.4cm}
\end{figure}
\begin{figure}
	\centering
	\subfigure[ MEMO27.] {\includegraphics[height=1.5in,width=3.5in]{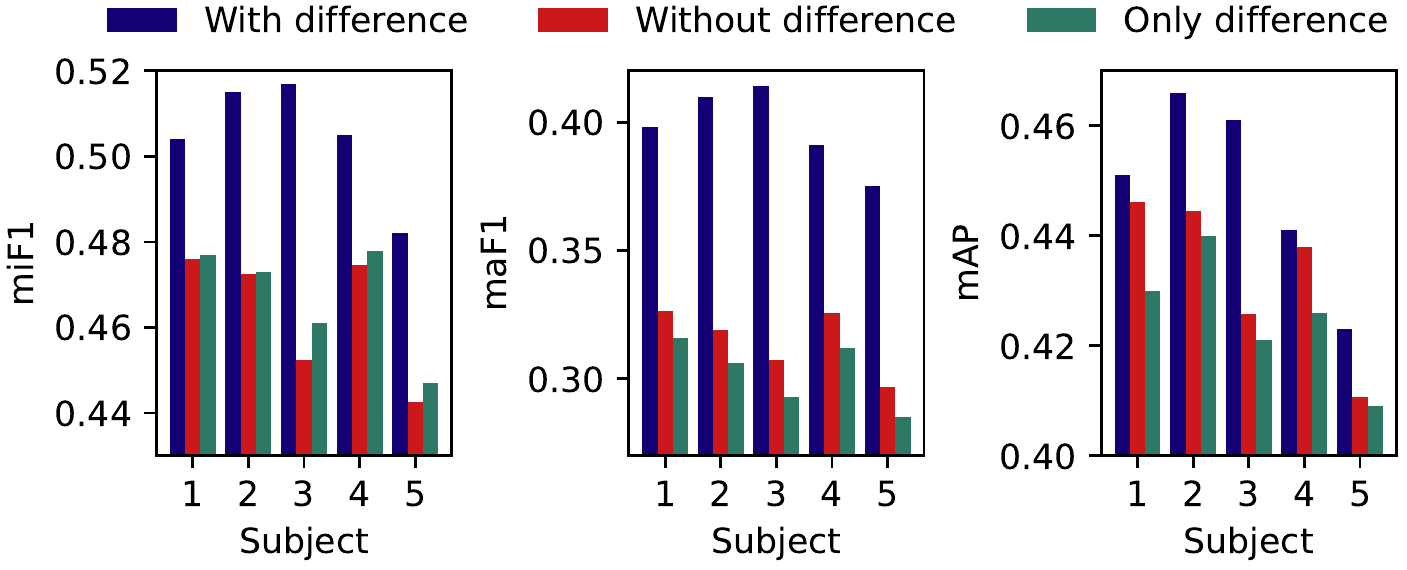}}
	\subfigure[ MEMO80.] {\includegraphics[height=1.3in,width=3.5in]{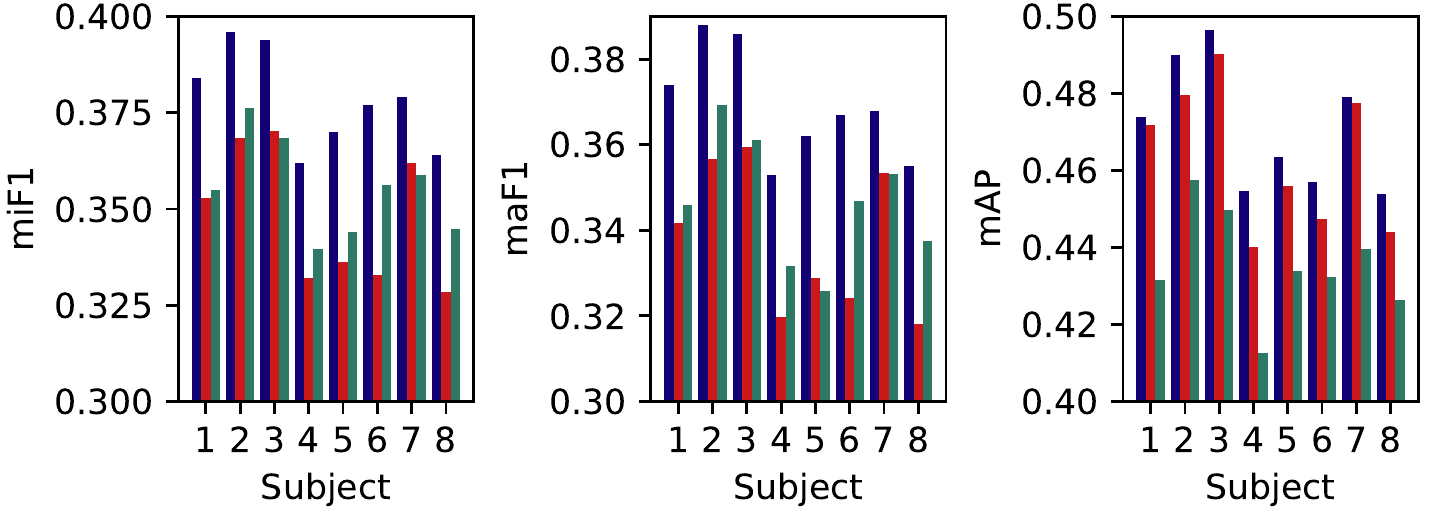}}
	\caption{Multi-label emotion decoding performance in terms of the primary metrics Micro F1, Macro F1 and mean Average Precision in MEMO27 and MEMO80 when the difference view is included (blue bar), excluded (red bar) and only input (green bar).}
	\label{2v3vbar}
	\vspace{-0.5cm}
\end{figure}
\subsubsection{Masked self-attention visualization}
We present the masked self-attention matrix $\mathbf{A}$ in Eq. (\ref{attention_mat}) which is adopted in the multi-label classification network for modeling the emotion label dependency in Fig. \ref{attention}. The attention matrix can be seen as a label-wise correlation map, whose primary component is the fixed label co-occurrence mask. A learnable component is also developed which can adjust the label correlation map slightly. The darker the color in the matrix, the greater the value of the attention, which is also deemed to be the greater the probability that the two emotions can be stimulated simultaneously. To be specific, we show the corresponding rows of several common emotions in the attention matrix in Fig. \ref{attention_bar}. We select 0.5 and 0.6 (before softmax) as the thresholds for MEMO27 and MEMO80 to mark the elements, respectively. For example, in MEMO27, if we express \emph{Adoration} we will be more likely to express \emph{Amusement}, \emph{Awe} and \emph{Joy}; while if we feel \emph{Angry} we may also express \emph{Disgust}, \emph{Fear}, \emph{Horror} and \emph{Sadness}. Similar results can be observed in Fig. \ref{attention_bar} (c) and (d) regarding MEMO80. 
\subsubsection{The effectiveness of bi-hemisphere discrepancy}
To further prove that the bi-hemisphere discrepancy can provide additional information for emotion decoding which leads to better decoding performance, we compare the two views in which the difference view is not included and all three views as the input on all subjects in MEMO27 and MEMO80. Fig. \ref{2v3vbar} reports the corresponding results on the primary metrics miF1, maF1 and mAP. We can observe that, compared with the experiments without the difference view, the decoding performance on all subjects in both datasets has been improved. It's worth noting that the improvement of miF1 and maF1 in MEMO27 is particularly significant when bi-hemisphere discrepancy is considered.

Apart from this, Fig. \ref{2v3vbar} also includes the performance when our model trained on only the difference view. For fair comparison, we realize this by setting all three inputs of BVAE to the difference view such that the expressive neural representations can also be learned. It is observed that when only the difference view is input, mAP of the model is lower than the result of the input of the left and right hemisphere views (red bar) but miF1 and maF1 is a little bit higher except maF1 in MEMO27 and individual subjects in MEMO80. Model can decode emotional states with relative accuracy when only the difference view is input, which indicates that the discrepancy of the left and right hemispheres indeed provides useful information for emotion decoding. Furthermore, model with all three views as input still has the best performance which means the common component and the discrepancy of the left and right view are complementary and both indispensable for emotion decoding.
\begin{figure*}
	\centering
	\subfigure[ $N_{ROIF}$ in MEMO27.]{\includegraphics[height=1.7in,width=2.2in]{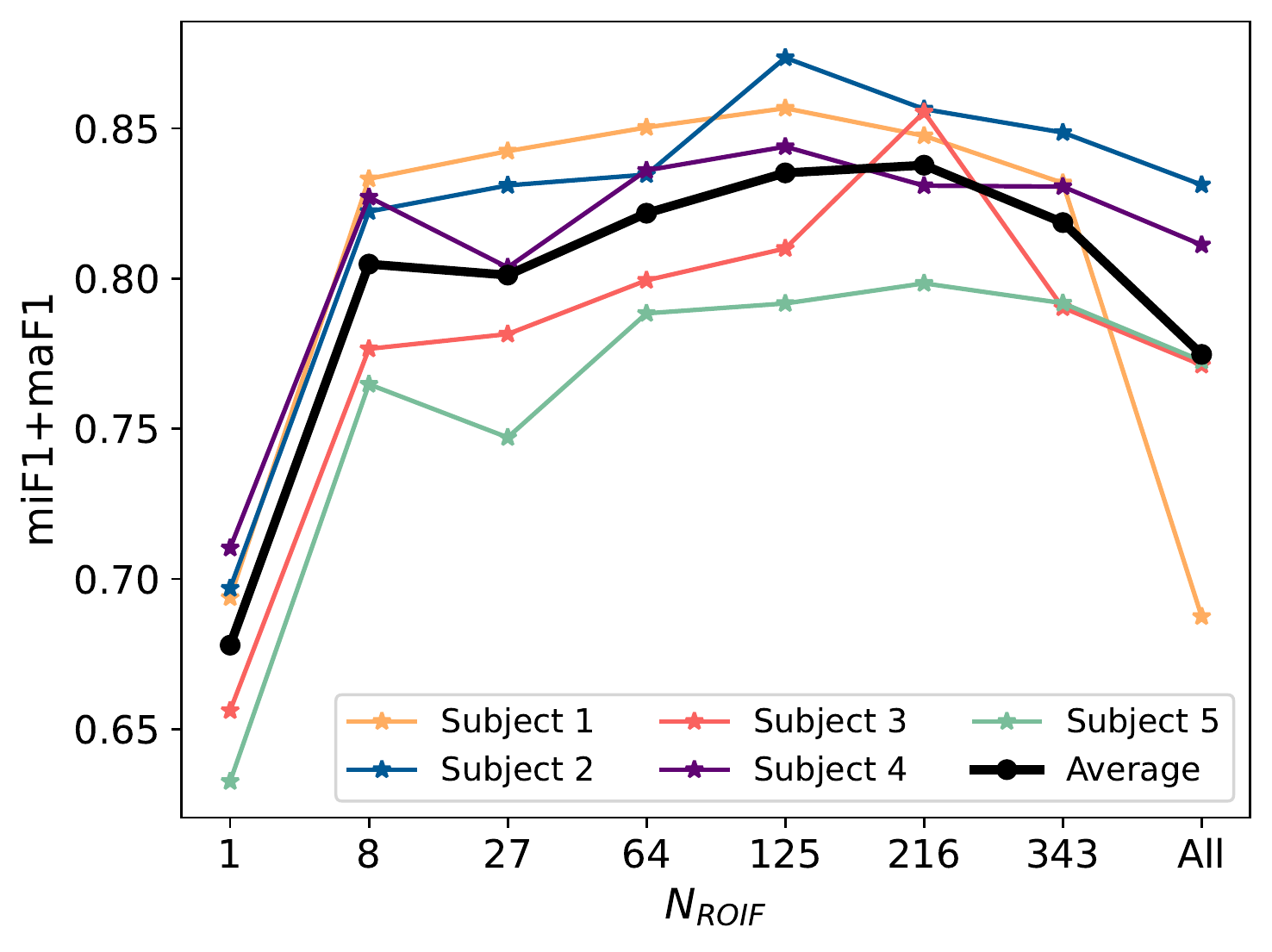}}
	\subfigure[$\lambda_1/\lambda_2$ in MEMO27.]{\includegraphics[height=1.7in,width=2.2in]{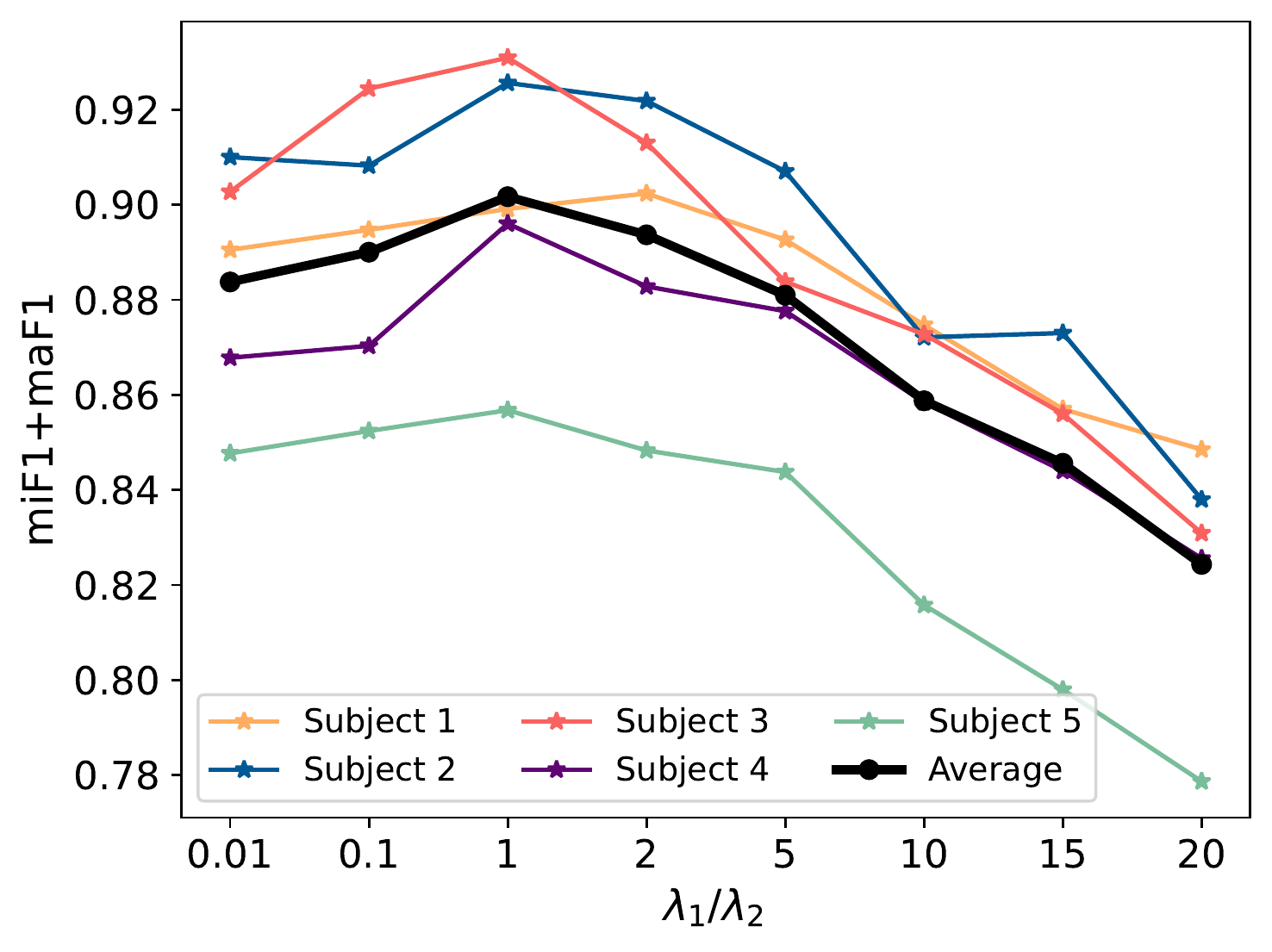}}
	\subfigure[$\lambda_1/\lambda_2$ in MEMO80.]{\includegraphics[height=1.7in,width=2.2in]{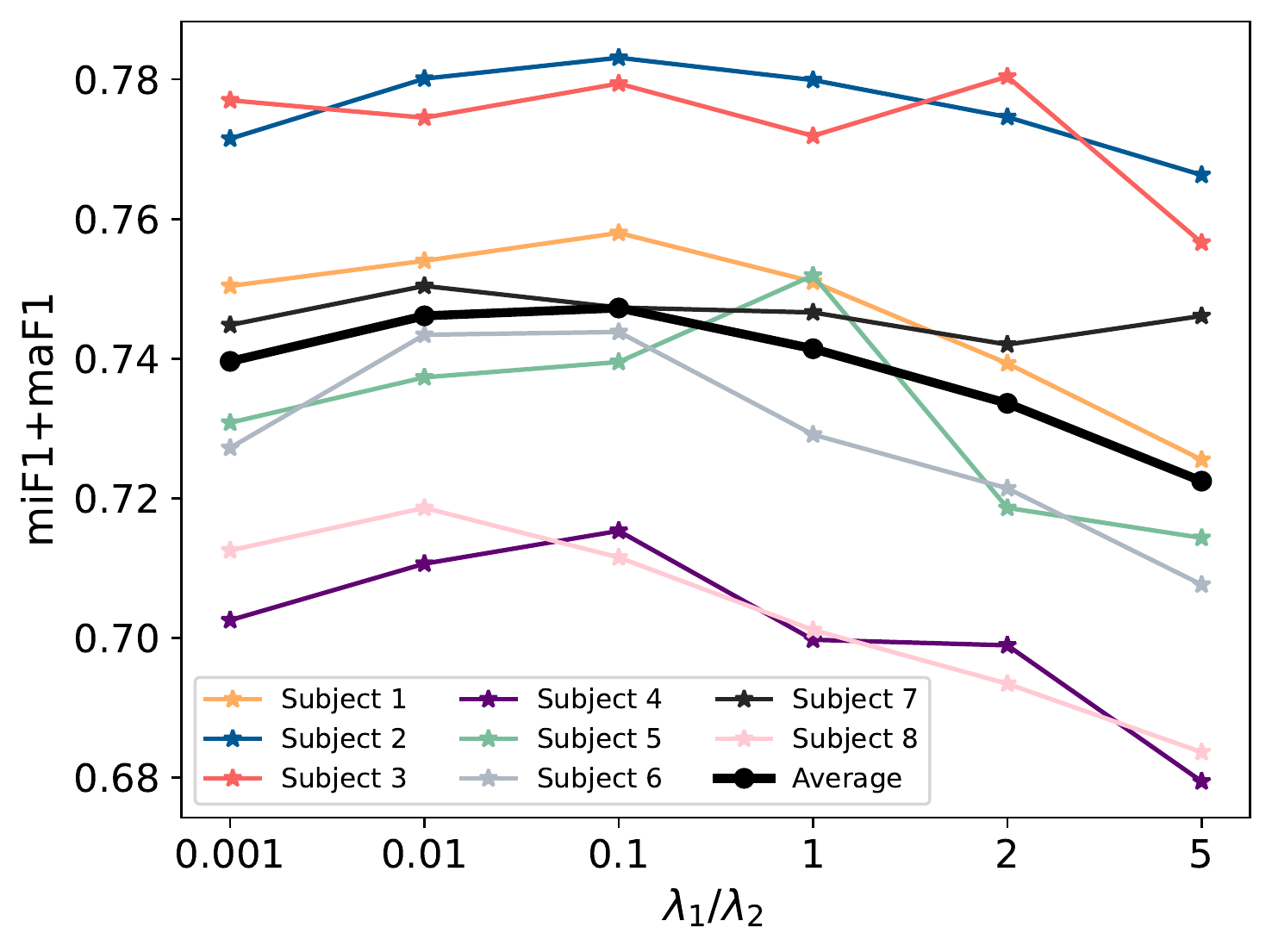}}
	\caption{Performance of ML-BVAE changes as the hyper-parameters $N_{ROIF}$ and $\lambda_1/\lambda_2$ varies in terms of miF1+maF1.}
	\label{ablation_roi}
	\vspace{-0.6cm}
\end{figure*}

\begin{table}[hbt!]
	\centering
	\caption{Comparisons of our method with its variant models for ablation study. $\uparrow$ ($\downarrow$) indicates the larger (smaller) the value, the better the performance (five subjects averaged in MEMO27 and eight subjects averaged in MEMO80). The model performance degradation caused by w/o BVAE and w/o ML-C is marked in terms of the comprehensive metircs.}
	\resizebox{\columnwidth}{!}{
		\begin{tabular}{c|c|ccccc|c} 
			\hline
			\hline
			Dataset & Metrics & w/o ROI-P & w/o BVAE & w/o ML-C&w/o E& w/o M& ML-BVAE\\
			\hline
			\multirow{6}{*}{MEMO27}
			& OneE$\downarrow$ & 0.354 & 0.314 & 0.312&0.309&0.310& \textbf{0.302}\\
			&  RL$\downarrow$  & 0.221  & 0.185  & 0.186 &0.190& 0.187& \textbf{0.186}    \\
			&  miF1$\uparrow$  & 0.462  & 0.380(-0.125)  & 0.461(-0.044)  &0.463&0.455& \textbf{0.505}   \\
			&  maF1$\uparrow$  & 0.313 & 0.230(-0.168)  & 0.291(-0.107)   &0.302&0.281& \textbf{0.398}   \\
			&  e-AP$\uparrow$ & 0.569  & 0.613  & 0.612 &0.611&0.613 & \textbf{0.619}   \\
			&  mAP$\uparrow$  & 0.386  & 0.417(-0.031)  & 0.430(-0.018)  &0.433&0.431& \textbf{0.448}   \\
			\hline
			\hline
			\multirow{6}{*}{MEMO80}
			& OneE$\downarrow$ &0.451 & 0.453 &0.443 &0.481&0.442& \textbf{0.424}\\
			&  RL$\downarrow$  &  0.311 & 0.307  &  0.305 &0.329&0.306& \textbf{0.298}    \\
			&  miF1$\uparrow$  &  0.341 & 0.201(-0.177)  &  0.333(-0.045) &0.374&0.361& \textbf{0.378}    \\
			&  maF1$\uparrow$  &  0.335 & 0.184(-0.185)  &    0.314(-0.055) &0.364&0.351& \textbf{0.369}   \\
			&  e-AP$\uparrow$  & 0.474 &  0.472 &  0.479 &0.453&0.479& \textbf{0.488}   \\
			&  mAP$\uparrow$  &  0.469 &  0.441(-0.030) &  0.460(-0.011) &0.435&0.462& \textbf{0.471}   \\
			\hline
			\hline
	\end{tabular}}
	\label{as}
\end{table}
\subsubsection{Ablation Study}
In order to explore the role of the three key components of our method separately, we design three variant models of ML-BVAE for ablation study. (1) \textbf{w/o ROI-P}: Use whole brain voxel signal instead of ROI pooling features as input to our model; (2) \textbf{w/o BVAE}: Firstly use MLP for single-view features respectively and then concatenate the output of each MLP as multi-view neural representations; (3) \textbf{w/o ML-C}: Directly connect the output of the encoder of BVAE to a linear classifier for prediction without multi-label classification network.

Furthermore, we design another two variant models for more fine-grained ablation study. For verifying the importance of modeling the information intra- and inter- views for neural representations learning, we consider a variant model \textbf{w/o E}: Remove the three extra ELBOs in Eq. \ref{Eq:mmelbo1}- \ref{Eq:mmelbo3} from the loss function. For illustrating the necessary of prior knowledge injection in the label correlation learning, we remove the mask from Eq. \ref{attention_mat} leading to the variant model \textbf{w/o M}.

Results in the two datasets are shown in Table \ref{as} and we have averaged the results of all subjects in each dataset. Results show that all the three key components in ML-BVAE are critical for more accurate emotion decoding from brain activity. Furthermore, the extra three ELBOs are helpful to neural representations learning and the prior knowledge mask facilitates label correlation learning, which can both contribute to multi-label emotion decoding.

Besides, it is necessary to analyze whether the high performance is attributed to the discriminative component with huge parameter space, so as to exclude the posterior collapse phenomenon. We mark the model performance degradation caused by the lack of BVAE (the generative component) and ML-C (the discriminative component) in the table in terms of the comprehensive metrics miF1, maF1 and mAP. We can observe that, compared with w/o ML-C, w/o BVAE has a greater impact on model performance. That is to say, the high performance should be attributed more to the expressive neural representations learning of BVAE.

\subsubsection{Parameter Sensitivity}
The two most important hyperparameters in our experiment are the number of features per ROI $N_{ROIF}$ and the trade-off parameter $\lambda_1/\lambda_2$ in Eq. ($\ref{loss_function}$). We design experiments by varying one parameter while fixing another parameter. We utilize the criterion miF1+maF1 \cite{yeh2017learning} for parameter selection.

Fig. \ref{ablation_roi} (a) gives an illustrative example of how the performance of ML-BVAE changes when $N_{ROIF}$ changes in MEMO27 \footnote{Since the number of voxels for each ROI in the MEMO80 is small, we select 8 by default.}. We find that both global average pooling of the voxels in ROI and utilizing all voxels are not suitable. In general, the performance of ML-BVAE increases first and then decreases with the increase of this parameter. Therefore we select the parameter which can reach the optimal performance in terms of this criterion for each subject.

Fig. \ref{ablation_roi} (b) and (c) show the performance of ML-BVAE changes when the trade-off parameter $\lambda_1/\lambda_2$ varies. In MEMO27, generally speaking, the model performance increases first and then falls down as the parameter increases. It is noticeable that the performance can be significantly degraded when the parameter becomes too large. Similar conclusions can be drawn in MEMO80.

\vspace{-0.5cm}
\section{Conclusion}
We have proposed a hybrid model ML-BVAE for fine-grained multi-view multi-label emotion decoding from visually evoked brain activity. The proposed method can be divided into three key components. Firstly, we employed ROI pooling for fMRI dimensionality reduction which can alleviate overfitting. Secondly, BVAE was used for multi-view neural representations learning and we also took the discrepancy of bi-hemisphere into consideration. At last, a multi-label classification network was implemented for emotion-specific representation learning and modeling the dependency of emotion labels. Our method can extract expressive neural representations for accurate multi-label emotion decoding up to 80 fine-grained emotion categories. We leveraged two fine-grained multi-label emotion decoding benchmark datasets, and comprehensive experiments on them have confirmed the superiority of the proposed method. In the future, decoding emotional states from noisy labels may be a promising research direction as there are some emotion categories whose ratings have low consistency among annotators.
\vspace{-0.5cm}
\section*{Acknowledgments}
This work was supported the National Key Research and Development Program of China under Grant 2021ZD0201503; in part by the National Natural Science Foundation of China under Grant 62206284, 61976209, 61906188; in part by the CAS International Collaboration Key Project under Grant 173211KYSB20190024; in part by Beijing Natural Science Foundation under Grant J210010 and Grant 7222311 and in part by the Strategic Priority Research Program of CAS under Grant XDB32040200. We thank Tomoyasu Horikawa and Naoko Koide-Majima for providing the necessary data for our research.

\appendix
\section*{A. Proof of the distribution of a product of Gaussian experts}
\begin{prop}
	Give a finite number $N$ of multi-dimensional Gaussian distributions $p_{i}(\mathbf{x})$ with mean $\boldsymbol{\mu}_{i}$ and covariance $\mathbf{\Sigma}_{i}$, in which $i = 1, ..., N$, the product $\prod_{i=1}^{N} p_{i}(\mathbf{x})$ is Gaussian with mean $(\sum_{i=1}^{N} \mathbf{\Sigma}_{i}^{-1}\boldsymbol{\mu}_{i})(\sum_{i=1}^{N} \mathbf{\Sigma}_{i}^{-1})^{-1}$ and covariance $(\sum_{i=1}^{N} \mathbf{\Sigma}_{i}^{-1})^{-1}$.
\end{prop}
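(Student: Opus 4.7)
The plan is to compute the product of Gaussian densities directly and identify its form by completing the square in the exponent. Since Gaussian densities depend on $\mathbf{x}$ only through the quadratic form in their exponent, the product of $N$ such densities is proportional to $\exp(-\tfrac{1}{2}Q(\mathbf{x}))$ where $Q(\mathbf{x})=\sum_{i=1}^{N}(\mathbf{x}-\boldsymbol{\mu}_i)^\top \mathbf{\Sigma}_i^{-1}(\mathbf{x}-\boldsymbol{\mu}_i)$. Once we show that $Q(\mathbf{x})$ is a quadratic form of the form $(\mathbf{x}-\boldsymbol{\mu})^\top \mathbf{\Sigma}^{-1}(\mathbf{x}-\boldsymbol{\mu})+c$ for the claimed $\boldsymbol{\mu}$ and $\mathbf{\Sigma}$, the product density must itself be a (renormalized) Gaussian with those parameters, because a density proportional to $\exp(-\tfrac{1}{2}(\mathbf{x}-\boldsymbol{\mu})^\top \mathbf{\Sigma}^{-1}(\mathbf{x}-\boldsymbol{\mu}))$ is uniquely determined by its normalization.

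The first step is to expand each term of $Q(\mathbf{x})$, discard the $\mathbf{x}$-independent constants into a single constant $c_0$, and collect by powers of $\mathbf{x}$. Using that each $\mathbf{\Sigma}_i^{-1}$ is symmetric, this yields
\begin{equation}
Q(\mathbf{x}) = \mathbf{x}^\top\!\Big(\sum_{i=1}^{N}\mathbf{\Sigma}_i^{-1}\Big)\mathbf{x} \;-\; 2\,\mathbf{x}^\top\!\Big(\sum_{i=1}^{N}\mathbf{\Sigma}_i^{-1}\boldsymbol{\mu}_i\Big) \;+\; c_0.
\end{equation}
Defining $\mathbf{\Sigma}^{-1}:=\sum_i \mathbf{\Sigma}_i^{-1}$ and $\mathbf{b}:=\sum_i \mathbf{\Sigma}_i^{-1}\boldsymbol{\mu}_i$, the expression becomes $\mathbf{x}^\top \mathbf{\Sigma}^{-1}\mathbf{x} - 2\mathbf{x}^\top \mathbf{b} + c_0$. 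Note $\mathbf{\Sigma}^{-1}$ is symmetric and positive definite as a sum of symmetric positive definite matrices, so $\mathbf{\Sigma}$ is well defined.

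The second step is the completion of the square. Setting $\boldsymbol{\mu}:=\mathbf{\Sigma}\mathbf{b}=\big(\sum_i\mathbf{\Sigma}_i^{-1}\big)^{-1}\sum_i\mathbf{\Sigma}_i^{-1}\boldsymbol{\mu}_i$, one checks that
\begin{equation}
\mathbf{x}^\top \mathbf{\Sigma}^{-1}\mathbf{x} - 2\mathbf{x}^\top \mathbf{b} = (\mathbf{x}-\boldsymbol{\mu})^\top \mathbf{\Sigma}^{-1}(\mathbf{x}-\boldsymbol{\mu}) - \boldsymbol{\mu}^\top \mathbf{\Sigma}^{-1}\boldsymbol{\mu},
\end{equation}
where symmetry of $\mathbf{\Sigma}^{-1}$ is used to recombine the cross terms. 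Hence $\prod_i p_i(\mathbf{x}) = K\exp(-\tfrac{1}{2}(\mathbf{x}-\boldsymbol{\mu})^\top \mathbf{\Sigma}^{-1}(\mathbf{x}-\boldsymbol{\mu}))$ for a constant $K$ that absorbs the $\mathbf{x}$-independent pieces. Renormalizing yields a Gaussian with mean $\boldsymbol{\mu}$ and covariance $\mathbf{\Sigma}$, matching the claim (the form $(\sum \mathbf{\Sigma}_i^{-1}\boldsymbol{\mu}_i)(\sum \mathbf{\Sigma}_i^{-1})^{-1}$ is equivalent upon transposing, since both factors are symmetric).

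I do not expect any step to be a real obstacle: the proof is essentially bookkeeping. The only subtlety to be careful about is the symmetry of the precision matrices when recombining cross terms, and ensuring that $\sum_i \mathbf{\Sigma}_i^{-1}$ is invertible, which follows immediately from positive definiteness of each $\mathbf{\Sigma}_i$. An optional closing remark could observe that the proposition specializes Equation~(\ref{eqn:derive:simple}) of the main text: taking $N=4$ with the prior and three single-view posteriors recovers exactly the PoE fusion formula used for $q_\phi(\mathbf{z}|\mathbf{x})$ in BVAE.
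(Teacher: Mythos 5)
Your proof is correct and is essentially the same argument as the paper's: the paper writes each Gaussian in canonical (natural-parameter) form $K\exp\{\boldsymbol{\eta}^{T}\mathbf{x}-\frac{1}{2}\mathbf{x}^{T}\mathbf{\Lambda}\mathbf{x}\}$ so that the product simply adds the $\boldsymbol{\eta}_i$ and $\mathbf{\Lambda}_i$, and "converting back from canonical form" there is exactly the completion of the square you carry out explicitly. Your version just makes the bookkeeping (symmetry of the precisions, invertibility of their sum) more explicit; no gap.
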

\begin{proof}\let\qed\relax The canonical form of the probability density of a Gaussian distribution is $K\exp\{\boldsymbol{\eta}^{T}\mathbf{x} - \frac{1}{2}\mathbf{x}^{T}\mathbf{\Lambda} \mathbf{x}\}$ where $K$ is a normalizing constant, $\mathbf{\Lambda} =\mathbf{\Sigma}^{-1}$, $\boldsymbol{\eta} = \mathbf{\Sigma}^{-1}\boldsymbol{\mu}$. Then we can get the probability density of product of $N$ Gaussian distributions $\prod_{i=1}^{N} p_i \propto \exp\{(\sum_{i=1}^{N} \boldsymbol{\eta}_i)^{T}\mathbf{x} - \frac{1}{2}\mathbf{x}^{T}(\sum_{i=1}^{N} \mathbf{\Lambda}_{i})\mathbf{x}\}$. Therefore, this product itself has the form of Gaussian distribution with $\boldsymbol{\eta} = \sum_{i=1}^{N} \boldsymbol{\eta}_i$ and $\mathbf{\Lambda} = \sum_{i=1}^{N}\mathbf{\Lambda}_i$. Converting back from canonical form, the product Gaussian has mean $\boldsymbol{\mu} = (\sum_{i=1}^{N} \mathbf{\Sigma}_{i}^{-1}\boldsymbol{\mu}_{i})(\sum_{i=1}^{N} \mathbf{\Sigma}_{i}^{-1})^{-1}$ and covariance $(\sum_{i=1}^{N} \mathbf{\Sigma}_{i}^{-1})^{-1}$.
\end{proof}
\section*{B. Bar graphs of comparisons of AP and mAP in $\%$ of ML-BVAE and compared methods}
\begin{figure*}
	\centering
	\includegraphics[height=1.6in,width=7.1in]{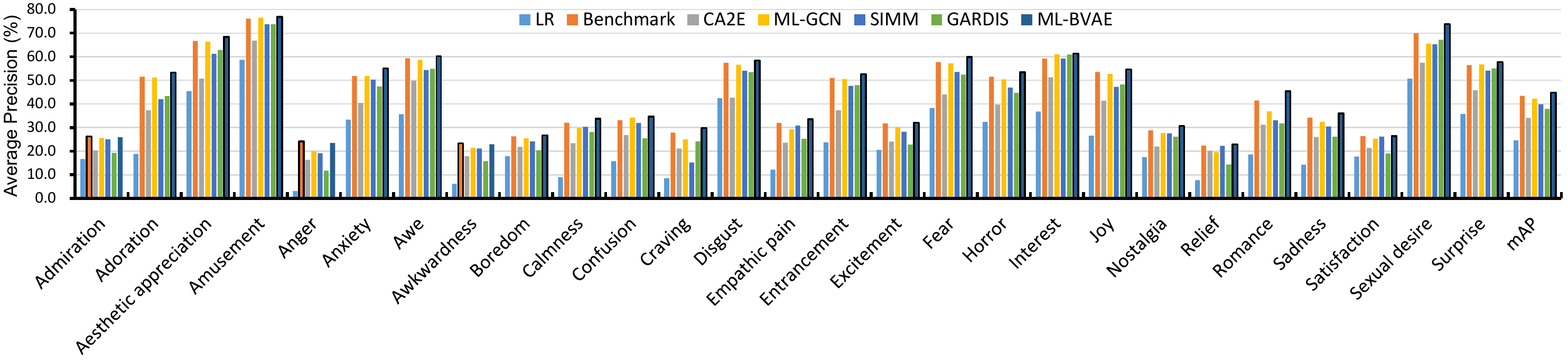}
	\caption{Comparisons of AP and mAP in $\%$ of ML-BVAE and compared methods in MEMO27 (five subjects averaged). The best performance is marked with a black outline.}
	\label{map27_bar}
\end{figure*}
\begin{figure*}
	\centering
	\subfigure {\includegraphics[height=1.6in,width=7.1in]{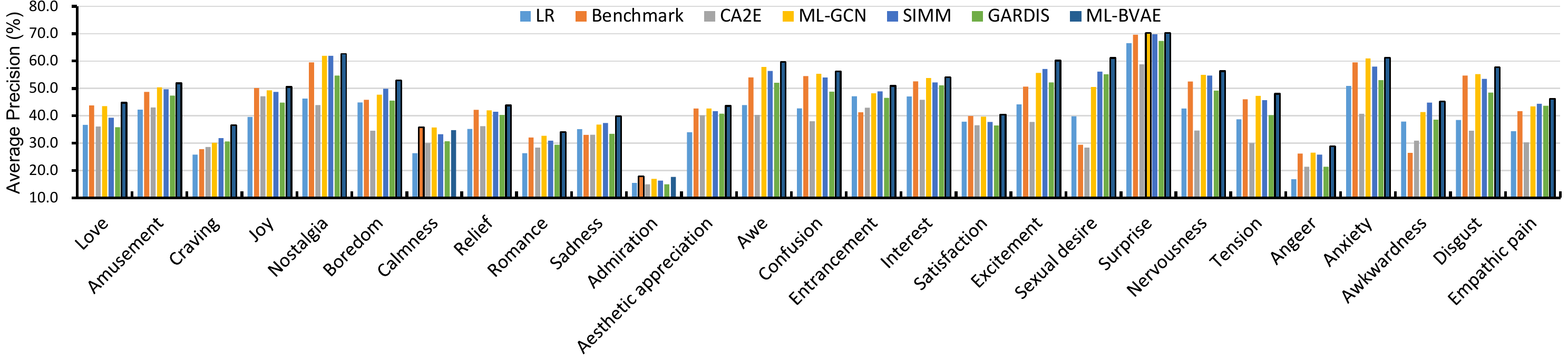}}
	\subfigure {\includegraphics[height=1.6in,width=7.1in]{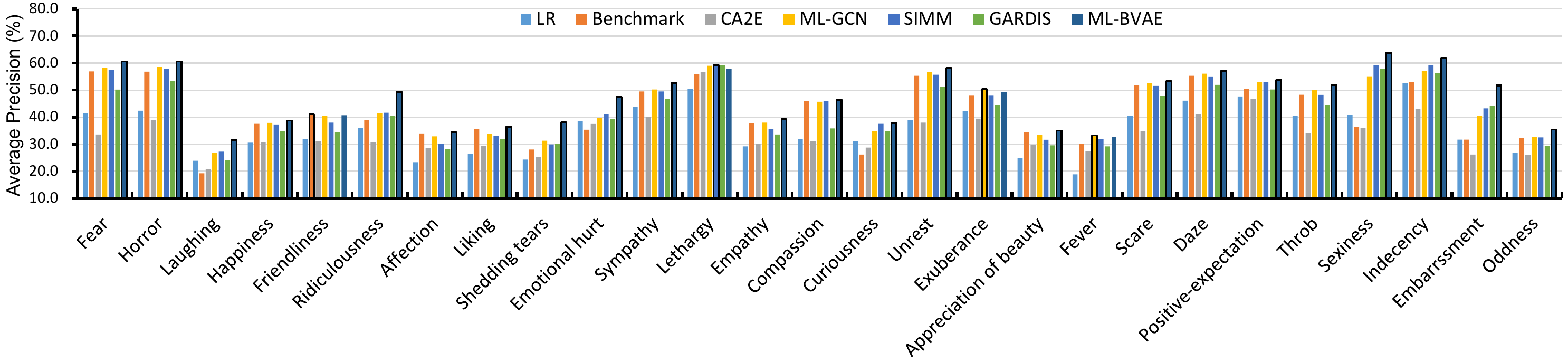}}
	\subfigure {\includegraphics[height=1.6in,width=7.1in]{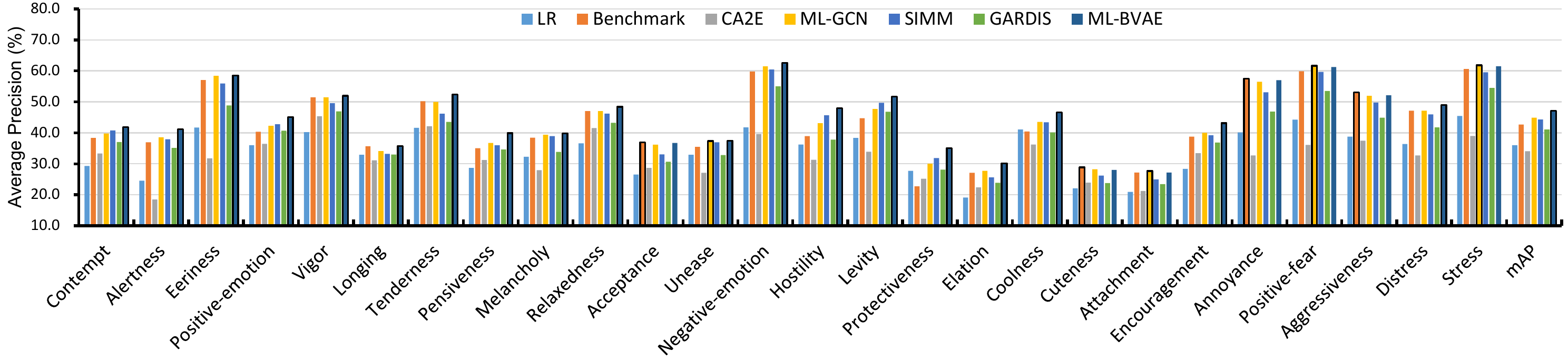}}
	\caption{Comparisons of AP and mAP in $\%$ of ML-BVAE and compared methods in MEMO80 (eight subjects averaged). The best performance is marked with a black outline.}
	\label{map80_bar}
\end{figure*}

\newpage
\bibliographystyle{IEEEtran}
\bibliography{output}

\vfill

\end{document}